\numberwithin{equation}{section} \makeatletter
\renewcommand{\subsection}{\@startsection
{subsection}{2}{0mm}{\baselineskip}{-0.25cm}
{\normalfont\normalsize\bf}} \makeatother
\newtheorem{theorem}{Theorem}[section]
\newtheorem{proposition}[theorem]{Proposition}
\theoremstyle{definition}
\theoremstyle{remark}
\newtheorem{remark}[theorem]{Remark}
\def \P {\mathbf P}
\def \bF {\mathbb F}
\def \F {\mathcal F}
\def \R {\mathbb R}
\def \E {\mathbf E}
\begin{document}

\title{The value of knowing the market price of risk
}


\author[K.~Colaneri]{Katia Colaneri}\address{Katia Colaneri, Department of Economics and Finance, University of Rome Tor Vergata, Via Columbia 2, 00133 Roma, IT.}\email{katia.colaneri@uniroma2.it}
\author[S.~Herzel]{Stefano Herzel}\address{Stefano Herzel, Department of Economics and Finance, University of Rome Tor Vergata, Via Columbia 2, 00133 Roma, IT.}\email{stefano.herzel@uniroma2.it}
\author[M.~Nicolosi]{Marco Nicolosi}\address{Marco Nicolosi, Department of Economics, University of Perugia, Via A. Pascoli 20, 06123 Perugia, IT.}\email{marco.nicolosi@unipg.it}

\date{}

\begin{abstract}
This paper presents an optimal allocation problem in a financial market with  one risk-free and  one risky asset, when the market is
driven by  a stochastic  market
price of risk.
We solve the problem in  continuous time, for an investor with a Constant Relative Risk Aversion (CRRA)  utility, under two scenarios: when the  market
price of risk is observable (the {\em  full information case}), and when it is not   (the {\em  partial information case}).
The corresponding market models are complete in the partial information case and incomplete in the other case, hence the two scenarios exhibit rather different features.
We  study how the access to
more accurate information on the market price of risk  affects the optimal strategies and
we determine the maximal price that the investor would be willing to pay to get  such information.
In particular, we  examine two  cases of additional information,
when an exact observation of the market price of risk is available either at time $0$ only (the {\em initial
information case}), or  during the whole investment period (the {\em dynamic information case}).

\end{abstract}

\maketitle

{\bf Keywords}: Portfolio optimization, Power utility,  Martingale Method, Partial Information.

\section{Introduction}\label{s1}

Ours is a classical expected utility maximization problem in continuous time, first studied  by Merton (1969) \cite{merton1969}.
We solve it via the martingale method, proposed by Karatzas el al. (1987) \cite{karatzas1987} and by Cox and  Huang (1989)  \cite{Cox and Huang}. The martingale method relies on duality theory and  transforms the original dynamic  problem, usually solved through the Hamilton-Jacobi-Bellman (HJB) partial differential equation,
into an equivalent static optimization problem. It has two main advantages over the more direct HJB approach: it
leads to a quasi-linear partial differential equation that is usually easier to solve and
 it provides an expression of the optimal wealth as a function of the state price density, that can be used to relate
the optimal strategy to the current state of the market.

A necessary assumption to apply the standard martingale method is that the state price density is  unique, that is
the market is complete.
However for our problem, in the full information case,
the trading strategies  may  also depend on the  market price of risk, that is a not traded asset, and  this makes the corresponding market model
incomplete. Therefore we must rely on a modification of the standard approach, the so called minimax martingale method, introduced by He and Pearson (1991) \cite{HeandP}.
The minimax method exploits the fact that,  in an incomplete market, there are infinitely many state price densities but they all assign the same values to the marketable claims, i.e. those claims
 attainable by admissible trading strategies involving the market securities.
Hence, the optimal  final wealth is determined by selecting the state price density which minimizes the maximal expected utility  of the final
wealth, the {\em minimax state price density}.

We  model the stock  as a geometric brownian motion with a market price of risk given by an Ornstein-Uhlenbeck process, which is a Gaussian,  mean reverting process.
This is a convenient assumption, adopted by several studies which  will be mentioned below, and that may be justified by  empirical evidence.
To solve the investment problem under partial information it  is necessary to identify  the {\em filter}, that is the conditional distribution of the
unobservable process given the available information.
The assumptions on the model of the market
allows us
to apply the linear finite dimensional
Kalman filter
to identify the dynamics of the filter
 and characterize its conditional distribution, see, e.g. Lipster and Shiryaev (2001) \cite{lipster2001statistics}.
Then, following Fleming and Pardoux (1982) \cite{fleming1982optimal}, we transform the original optimization
problem into an equivalent one where all the state variables are adapted to the same filtration. Under this transformation the market model
is complete and the classical martingale method can be used, see e.g. Bj\"{o}rk, Davis and Landen (2010) \cite{BDL}.

Another important consequence of the assumptions on the  dynamics of the assets and on the utility of the investor is that the state variables of the market model, represented by the
(logarithm of) minimax state price density and the market price of risk in the full information case, and by the
unique state price density and the filter in the partial information case, are jointly affine.
This fact allows us to compute the corresponding optimal wealths (and strategies) in closed form, after solving  a system of Riccati equations that
is homogeneous under full information  and non-homogeneous under partial information.

We  apply the results for the full and partial information  problem to compute  the value of initial and  dynamic information. Of course,
by increasing the information set, the investor  gets a higher  expected optimal utility.
 To measure the  subjective value of such enlargements
 we compute the corresponding reservation prices.
 Again, because of the  structure of the model, their expressions are simple.
 The last part of the paper is devoted to numerical examples that illustrate a few applications of our results. Some  of the results are rather unexpected: to mention at least one of them, we will see that to maximize the Sharpe ratio of an investment, one should follow the strategy of a partially informed portfolio manager with CRRA utility rather than that of a fully informed one!

Our study relies on a long list of previous contributions, which we will mention below, but its closest references are   Kim and Omberg (1996)
\cite{kim}  and Brendle (2006)  \cite{brendle2006}, who study optimal investment problems similar to ours by using the  HJB approach,
showing that the HJB equations can be reduced to a system of Riccati equations.
  In particular, while
Kim and Omberg (1996) \cite{kim} are interested in the full information case for HARA utility functions, Brendle (2006)  \cite{brendle2006} also  focuses on  a partially informed investor endowed with bounded CRRA preferences, extending his analysis to a multi-dimensional market model.
We rely on many of their results, especially those related  to the solutions to the Riccati equations.
However, both  these papers
 fail to provide a verification theorem for their results, that is, they  only show  necessary, but not sufficient, conditions for the optimality of their proposed solutions.  In particular, Kim and Omberg (1996) \cite{kim} write
“{\em (...) we are not acquainted with any verification theorem that fits the model above, despite its relative simplicity (...)}”. They mention the fact that the classical
verification theorems cannot be applied because the
indirect utility that solves the Bellman equation is a function of the investor wealth that is
not restricted to a closed set. Hence, they can only solve numerically their HJB equation to suggest that, for a given choice of parameters, there should be no signs of multiple solutions.
Instead, by using the martingale approach, we prove  verification theorems for both the cases of our interest (Theorem \ref{VT} and \ref{VTpartial}), we apply them to our solutions to show that they are effectively optimal (Theorem \ref{prop:OptSolFullInfo} and \ref{prop:OptSolPartInfo}), and we provide sufficient conditions that are easy to verify directly on any set of  parameters (Proposition \ref{prop:bound} and \ref{prop:boundPI}) .

Having mentioned what we believe are the most important theoretical contributions of this paper, we summarize the other ones. We derive the distribution, conditional and unconditional, of the optimal wealth under full and under partial information at any time within the investment horizon (Proposition \ref{prop:MGF_FI} and \ref{prop:MGF_PI}). To the best of our knowledge such results are new and  have never been addressed before, despite the fact that the knowledge of the distribution of the optimal wealth may be useful for applications in portfolio and risk management.
Another novelty inspired the title of our paper, that is we assign a price to the information that an investor may buy from an expert who is able to provide the value of the market price of risk either at the beginning of the investment period or continuously in time. We hope that this result may provide a new tool to measure the level of uncertainty on the returns of an asset, to be used along with the standard ones based on volatility or implied volatility.
To support our theoretical findings we provide a rich numerical analysis that also shows interesting and sometimes unexpected results.
Last, but not least, by
solving the optimization problem under full information, we show a new application of the powerful
minimax martingale approach where it is possible to explicitly identify the minimax state price density and the associated penalization process. This is a  nice example that may be useful for  didactical purposes.

Before completing this introduction with a literature review, we
provide a short description of the rest of the paper.  In Section \ref{sec:market} we introduce the modeling framework. Section \ref{sec:fi} solves the optimization problem under full information, while
Section \ref{sec:partial_info} under partial
information. In
both sections we characterize the optimal investment strategy and provide a closed form representation for the optimal wealth as a function of the relevant state variables as well as
for its distribution. In Section \ref{sec:value} we define and compute the value of initial and dynamic information. A numerical study to illustrate the effects of the
parameters on the distribution of the wealth and on the value of information  is presented in Section \ref{sec:numerics}. Section \ref{sec:conclusions} concludes.

\subsection{Literature review}

Optimal investment problem in continuous time started from the work of Merton (1969) \cite{merton1969}, and extended since then in many directions with the scope of including more realistic situations. The extension
considered by us, when the drift of asset prices is not directly observable by the investor, leads to  problems of partial information. Problems of this type have been addressed considering several modelizations of the unobservable risk factors. Contributions in the case where prices are modeled as diffusions can be found for instance in Lackner (1995) \cite{lackner1995}, Lackner (1998) \cite{lackner1998} and Brendle (2006) \cite{brendle2006} under different approaches. In Brennan (1998) \cite{brennan1998} and Xia (2001) \cite{xia2001} the authors discuss the effect of learning on the portfolio choices. The setting where prices depend on an unobservable Markov chain is considered, for instance in B\"auerle and Rieder (2005) \cite{bauerle2005} and Haussmann and Sass (2004)
\cite{haussmann2004}  and  Barucci and Marazzina (2015) \cite{BarucciMarazzina2015}. 
Considering investors endowed with different levels of information motivates the work of Fouque, Papanicolaou, and Sircar  (2015) \cite{foque-1} who analyze the loss of utility due to partial information. In Frey, Gabih and Wunderlich (2012)
\cite{frey2012} and  Gabih et al. (2014) \cite{gabih} expert opinions in the form of signals at random discrete time points are investigated. This idea is extended to the case where expert opinions arrive continuously in time by
Fouque, Papanicolaou and Sircar (2017) \cite{foque} and by Davis and Lleo (2013)
\cite{davis} who implement the Black–Litterman model in a
continuous time setting and use separability of the filtering problem and the stochastic control problem to incorporate analyst views and non-tradeable assets as additional
source of observation to estimate the filter.
A similar setting  has been studied  by Danilova, Monoyios and Ng (2010) \cite{danilova2010} who assumed that the investor has partial information about the drift of the stock
price but also some privileged information about the future of stock price. In Putsch\"ogl and Sass (2008) \cite{sass2008} an optimal investment and consumption problem under partial observation is
analysed using Malliavin calculus.
Investment problems in a market with two cointegrated assets under partial information are studied in some recent works as for instance Lee and Papanicolaou (2016) \cite{lee2016} and Altay et al. (2018, 2019) \cite{altay2018}, \cite{altay2019}.

The issue of assessing the value of information is a classical one in economics and finance. Pikovsky and Karatzas (1996) \cite{PK} presented the problem of computing the
value of  initial information for an investor, defined as the informational gain in terms of incremental utility provided by the access to an enlarged filtration.
Amendinger, Becherer and Schweizer (2003) \cite{Amendinger2003} addressed the problem of computing the value of information for a trader who has the opportunity to buy some extra information. The problem is formulated for a complete market in the
mathematical framework of an initially enlarged filtration, and the value of information is derived via a comparison of the expected utility from terminal wealth.  
Chau, Cosso and Fontana (2018) \cite{Fontana2018} extended their approach to estimate the value of an insider information that may allow for an
arbitrage opportunity, assuming that unbounded profits cannot be reached with bounded risk.

We  model the market price of risk as an Ornstein-Uhlenbeck mean reverting process. We refer to Wachter (2002) \cite{wachter} for a review of the most important empirical contributions justifying such assumption. Wachter (2002) \cite{wachter}  solved the optimal investment and consumption problem in a complete model where the market price of risk and the stock return are perfectly negatively correlated.
A setting close to ours under full information where returns of the risky asset are driven by an Ornstein-Uhlenbeck process was proposed by Kim and Omberg (1996) \cite{kim}, who studied the portfolio optimization problem for a
HARA investor
and discussed the existence of the so called {\em nirvana solutions}, which happen when an infinite expected utility is
reached in finite time. An application of this setting to the problem of a fund manager whose compensation
depends on the relative performance with respect to a benchmark can be found in Nicolosi, Angelini and Herzel (2018) \cite{NHA}, and Herzel and Nicolosi (2019) \cite{HN2019}.

\section {The general setting} \label{sec:market}

Let $(\Omega, \mathcal{F}, \P)$ be a fixed probability space endowed with a complete and right continuous filtration ${\mathbb{F}
} = ({\mathcal{F}_t})_{\{0 \leq t \leq T \}}$ representing the global information flow where $T$ is a fixed time horizon. All processes defined below are assumed to be $\bF$-adapted. We consider a
market model with one risky asset $S$, \textit{the stock}, and one risk-free asset $B$ with
dynamics
\begin{eqnarray*}
\frac{d S_t}{S_t} & =  & \mu_t   dt + \sigma dZ^S_t \label{generalmodel1}  \quad S_0=s_0\in \R^+,\\
\frac{d B_t}{B_t} & =  & r   dt, \quad B_0=1
\label{generalmodel2}
\end{eqnarray*}
where  $\sigma>0$ and $r\geq 0$ are constant, $Z^S$ is a standard,  one dimensional, $\bF$-Brownian motion and
the drift process $\mu$  is of the form
 \begin{equation*}
\mu_t = r + \sigma X_t.
\label{drift}
\end{equation*}
The process $X$ represents the {\em market price of risk} $X$  and is assumed to follow an Ornstein-Uhlenbeck process with dynamics
\begin{equation*}
d{X_t} = -\lambda ({X_t}-\bar{X})dt+ \sigma_X dZ_t^X,
\label{XStoc}
\end{equation*}
where $X_0$ is a normally distributed random variable with mean $\pi_0$ and variance $R_0$,  $\lambda >0$ is a constant representing  the strength of attraction toward the
long term expected mean $\bar{X}\geq 0$, $\sigma_X > 0$ is  the volatility of
the market price of risk and  $Z^X$ is a  standard one-dimensional $\bF$-Brownian motion correlated with $Z^S$ with
\begin{equation*}
d\langle Z^X,Z^S \rangle_t = \rho\; dt,
\label{correlation}
\end{equation*}
for a constant correlation coefficient $\rho \in [-1,1]$. Let $Z^\bot$ be a Brownian motion independent of $Z^S$ such that $Z^X=\rho Z^S+ \sqrt{1-\rho^2} Z^\bot$. Then without loss of generality we can assume that $\bF$ is the complete and right continuous filtration generated by $ (Z^S, Z^\bot)$ .


An investor  trades the risky asset and the risk-free asset continuously in time, starting from an initial capital $w$, to maximize the  expected utility of her final wealth  at time $T$.
Her trading strategy  is self-financing and given by the process $\theta=\{\theta_t, \ t \in [0,T]\}$ representing the proportion of portfolio value  invested in the risky
asset at  time $t \in [0,T]$.
We assume the standard integrability condition on $\theta$
\begin{equation}\label{eq:integrability}
\E\left[\int_0^T\left(|\theta_t X_t|+\theta^2_t\right)   d t\right] <\infty.
\end{equation}
to ensure that the associated wealth process
\begin{equation}
\frac{dW_t}{W_t} = (r + \theta_t \sigma X_t)dt+\theta_t\sigma dZ^S_t, \quad W_0=w>0.
\label{BC}
\end{equation}
is well defined  (note that we also assume that no dividends are paid by the stock before time $T$) and to exclude arbitrage opportunities.
Further restrictions on the measurability of the process $\theta$ depending on the information set of the investor will be given in the next sections.

The investor has a power utility function
\begin{equation}
u(x) = \frac{1}{1-\gamma}x^{1-\gamma},  \label{utility}
\end{equation}
for every $x>0$ and for a positive risk aversion parameter $\gamma \neq 1$. By setting $\gamma=1$ we get the logarithmic utility $u(x)=\log x$. Note that for $\gamma>1$ the function $u(x)$ is bounded above while it becomes unbounded when $\gamma<1$.


We will solve optimization problems corresponding to
two different assumptions on the information flow available to the investor.
In the first case, we  assume that the investor observes both the stock price and  the market price of risk; therefore her information flow is given by the global filtration  $ {\mathbb{F}}=(\F_t)_{t \in [0,T]}$. In particular the initial information $\F_0$ is given by the sigma algebra generated by $X_0$ enlarged with the collection of $\P$-null sets.
In the second case, we assume that the investor directly observes stock prices but not the market price of risk. At any time $t \in [0,T]$, the value of $X$ has to be inferred from the available information, represented by  the natural filtration generated by the stock price process completed by the $\P$-null sets and
denoted by $\mathbb{F}^S$, where the initial information is $\F^S_0=\{\emptyset, \Omega\}$.

\section{Optimal investment under full information}\label{sec:fi}
In this section we consider a fully informed investor who observes the path of the stock   and  of the market price of risk. The investor wants to maximize the expected utility of her wealth at a time $T>0$, hence her problem is
\begin{equation}\label{pr:opt_full}
\max_{\theta \in \mathcal{A}(w)}\E\left[u(W_T)\right]
\end{equation}
where $\mathcal{A}(w)$ is the set of $\bF$-predictable self-financing strategies satisfying the  integrability condition \eqref{eq:integrability} starting from an initial wealth $w$.
Problem \eqref{pr:opt_full} is equivalent to
\begin{equation}\label{pr:opt_full2}
\max_{\theta \in \mathcal{A}(w)}\E\left[\frac{1}{1-\gamma}W_T^{1-\gamma}|\F_0\right] .
\end{equation}
In fact, a strategy $\theta^*\in \mathcal{A}(w)$ is optimal for \eqref{pr:opt_full} if and only if it is optimal for \eqref{pr:opt_full2} almost surely, because $\theta^*_0$ is $\F_0$-measurable. In the sequel we will denote by $\E_t$ the conditional expectation given $\mathcal F_t$.

In this setting there are two risk factors and only one  asset that can be used as a hedging instrument, therefore the market is not complete.
The
 state price densities $\xi^\nu$  satisfy the equation
\begin{equation}
\frac{d\xi_t^{\nu}}{\xi^\nu_t} = -r dt -X_t dZ_t^S - \nu_t \sqrt{1-\rho^2} dZ_t^\bot, \quad \xi_0^{\nu} = 1
\label{BC_eps}
\end{equation}
where $\nu=\{\nu_t, \ t \in [0,T]\}$ is such that $\E\left[\int_0^T \nu^2_t d t\right]<\infty$ and $\E\left[\int_0^T|\xi_t^{\nu}|^2dt\right]<\infty$. 

The process $Z_t^\bot$  is
orthogonal to the space of attainable payoffs (i.e. payoffs that can be reached by feasible self-financing strategies). If the stock price process and the market price of
risk are perfectly correlated (positively or negatively), $\sqrt{1-\rho^2}$ 
vanishes, the market becomes complete and the state price density is unique.

To solve the optimal investment problem in an incomplete market we
follow  He and
Pearson (1991) \cite{HeandP}
 and apply the  martingale approach to transform the dynamic problem \eqref{pr:opt_full2}  into the equivalent   static  one
\begin{equation}
\min_\nu \max_{W_T} \E_0[u(W_T)],
\label{static_pr}
\end{equation}
subject to the constraint
\begin{equation}
w = \E_0[\xi_T^\nu W_T] \label{static_BC}.
\end{equation}
The optimal  $\nu^*$ for the problem \eqref{static_pr}-\eqref{static_BC} determines the {\em minimax state price density} process $\xi^*$. The role of the  process
$\nu^*$  is to penalize contingent claims that cannot be replicated by feasible portfolio strategies. For example, $\nu^*_t=0$ $ \ \P$-a.s. for $t \in [0,T] $ implies that no penalization is necessary and a feasible optimal strategy is naturally chosen by the investor.
For power utility functions of the form \eqref{utility}, a sufficient condition for the existence of  $\xi^*$  is that $\gamma>1$, see He and Pearson (1991) \cite[Theorem 4
and Theorem 6]{HeandP}.

The Lagrangian function associated to problem \eqref{static_pr}-\eqref{static_BC} is
$${\mathcal L}(W_T,\lambda_0)= \E_0[u(W_T)] -\lambda_0 (\E_0[\xi_T^* W_T] - w),$$
where $\lambda_0$ is the  multiplier from the constraint \eqref{static_BC}. From standard results (e.g. Karatzas et al. (1987) \cite{karatzas1987}), the optimal final
wealth satisfies
\begin{equation}
W^*_T = g(\lambda_0 \xi^*_T ) \label{WT}
\end{equation}
where function  $g(\cdot)$ is the inverse of the marginal utility $u'$. For the power
utility  \eqref{utility} $g(y)=y^{-\frac{1}{\gamma}}$. Since ${\xi^*}$ is a state-price density process, the optimal wealth at
time $t$ is
\begin{eqnarray}
 W^*_t  &= &  {\xi^*_t}^{-1} \E_t \left[{\xi^*_T} g(\lambda_0 \xi^*_T )  \right] \nonumber \\
 &= & {\xi^*_t}^{-1} \lambda_0^{-\frac{1}{\gamma}} \E_t \left[{\xi^*_T}^{1-\frac{1}{\gamma}}\right] \label{eqWt}.
\end{eqnarray}

\begin{remark}
This approach can also be applied to logarithmic utility functions. In this case $g(y)=\frac{1}{y}$, the optimal wealth at
time $t$ is
\begin{eqnarray}
 W^*_t  =  {\xi^*_t}^{-1} \E_t \left[{\xi^*_T} g(\lambda_0 \xi^*_T )  \right]= (\lambda_0 \xi^*_t)^{-1}.\label{eq:Wlog}
\end{eqnarray}
By applying It\^o formula to \eqref{eq:Wlog} we get
\begin{align}
\frac{d W^*_t}{W^*_t}=&(r+ X^2_t+(1-\rho^2)(\nu^*_t)^2) d t + X_t dZ^S_t+\sqrt{1-\rho^2} \nu^*_t dZ_t^\bot.\label{eq:W*}
\end{align}
By equating the predictable quadratic covariations of $W^*$ and $Z^S$ computed from  \eqref{BC} and \eqref{eq:W*} we get the optimal strategy $\displaystyle \theta^*_t=\frac{X_t}{\sigma}$. This strategy is called myopic because it does not depend on the investment horizon.
Note that in this case the minimax state price density is associated to the penalty process $\nu^*_t=0$.
\end{remark}

The following verification theorem states sufficient conditions to solve the  optimization problem. 

\begin{theorem}[Verification Theorem under full information] \label{VT}
Let the function $F(Y, X, t)$ be the solution to the partial differential equation (where  subscripts denote partial derivatives)
\begin{eqnarray}
&\frac{1}{2}&F_{YY}Y^2 \left( X^2 + {\nu^*(Y,X,t)}^2 (1-\rho^2) \right) +  F_{XY}Y \sigma_X\left(\rho X + {\nu^*(Y,X,t)} (1-\rho^2)\right)
\nonumber \\
&+ &   \frac{1}{2}F_{XX}\sigma_X^2  - F_X \left(\sigma_X \left(\rho X + {\nu^*(Y,X,t)} (1-\rho^2)\right) + \lambda(X-\bar X) \right) \nonumber \\
&+ &   F_t   - rF + rF_Y Y  = 0
\label{eqpde}
\end{eqnarray}
where
\begin{eqnarray}
\nu^*(Y,X,t) = - \frac{\sigma_X F_X(Y,X,t)}{YF_Y(Y,X,t)} \label{c3}
\end{eqnarray}
with the boundary conditions
\begin{eqnarray}
F(Y,X,T) = Y^{\frac{1}{\gamma}} \label{c1}\\
F(Y_0, X_0,0)= w \label{c2}
\end{eqnarray}
for some constant $Y_0 > 0$, $F_Y(Y,X,t)\neq 0$, and $F(Y,X,t)\rightarrow F(Y,X,T)$ as $t\rightarrow T$.

Assume that the following conditions hold:
\begin{itemize}
\item[(i)] the function $\nu^*(Y,X,t)$
is sublinear and locally Lipschitz for $(Y,X,t)\in \R^+ \times \R \times [0,T]$
(this condition implies that the process $\xi^*_t$ satisfying \eqref{BC_eps} with $\nu^*_t:=\nu^*(Y_0(\xi^*_t)^{-1},X_t,t)$ is a well defined local martingale),
\item[(ii)]
\begin{equation}
\E\left[\int_0^T\!\!\!\!\left((Y_t^{-1}F(Y_t, X_t, t))^2 + (F_Y(Y_t, X_t, t))^2\right) (X_t^2+(\nu^*(Y_t, X_t, t))^2) d t\right]\!<\!\infty\label{eq:integral},
\end{equation}
where
$Y_t:=Y_0(\xi^*_t)^{-1}$, for $t \in [0,T]$.
\end{itemize}

Then the process $\xi^*_t$  is the minimax state price density, the optimal wealth is
$W^*_t=F(Y_t,X_t,t)$ and the optimal investment strategy is
\begin{align}
\theta_t^*=\frac{F_{Y}(Y_t, X_t, t)Y_t X_t+ \rho \sigma_X F_{X}(Y_t, X_t, t)}{\sigma F(Y_t, X_t, t)},\label{eq:optimal_strategy}
\end{align}
for every $t \in [0,T]$.
\end{theorem}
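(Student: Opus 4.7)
My plan is to verify directly that the process $W^*_t := F(Y_t,X_t,t)$, with $Y_t := Y_0/\xi^*_t$, is the self-financing wealth of the admissible strategy $\theta^*$ defined in \eqref{eq:optimal_strategy} starting from $w$; that its terminal value coincides with the pointwise maximizer $g(\lambda_0\xi^*_T)$ appearing in \eqref{WT}; and that optimality then follows from the standard concavity--duality argument of Karatzas--Cox--Huang in the He--Pearson minimax formulation. The specific form of $\nu^*$ in \eqref{c3} will turn out to be exactly the first-order stationarity condition that identifies $\xi^*$ as the minimax state price density.

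First, I would apply It\^o to $Y_t = Y_0/\xi^*_t$ using \eqref{BC_eps}, with $\nu^*_t := \nu^*(Y_t,X_t,t)$, to obtain
\begin{equation*}
\frac{dY_t}{Y_t} = \bigl(r + X_t^2 + (\nu^*_t)^2(1-\rho^2)\bigr)\,dt + X_t\,dZ^S_t + \nu^*_t\sqrt{1-\rho^2}\,dZ^\bot_t.
\end{equation*}
Next I apply It\^o to $F(Y_t,X_t,t)$. Two points do all the work: (a) the diffusion coefficient of $dZ^\bot_t$ is $(F_YY\,\nu^*+\sigma_XF_X)\sqrt{1-\rho^2}$, which vanishes identically by the choice \eqref{c3}, so the wealth is driven only by $Z^S$; and (b) replacing $F_t$ via the PDE \eqref{eqpde} and using \eqref{c3} to cancel the terms quadratic in $\nu^*$, the drift collapses to $rF + (F_YYX + \rho\sigma_XF_X)X$. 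Dividing by $F$ and comparing with \eqref{BC} identifies precisely the $\theta^*$ of \eqref{eq:optimal_strategy}, yielding
\begin{equation*}
\frac{dW^*_t}{W^*_t} = (r + \theta^*_t\sigma X_t)\,dt + \theta^*_t\sigma\,dZ^S_t, \qquad W^*_0 = F(Y_0,X_0,0) = w
\end{equation*}
by \eqref{c2}. Hypothesis (ii) together with \eqref{c3} yields the admissibility \eqref{eq:integrability} of $\theta^*$ and the true-martingale property of $\xi^*_tW^*_t$, which gives the static budget identity $\E_0[\xi^*_T W^*_T] = w$. The terminal condition \eqref{c1} then gives $W^*_T = Y_T^{1/\gamma} = (\lambda_0 \xi^*_T)^{-1/\gamma} = g(\lambda_0\xi^*_T)$ with $\lambda_0 := Y_0^{-\gamma}$.

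Optimality follows by duality. For any admissible $\theta$ with wealth $W$ and any admissible penalty $\nu$, the nonnegative local martingale $\xi^\nu_t W_t$ is a supermartingale, hence $\E_0[\xi^\nu_T W_T] \le w$. Concavity of $u$ with the first-order condition $u'(W^*_T) = \lambda_0\xi^*_T$ yields $u(W_T) - u(W^*_T) \le \lambda_0 \xi^*_T (W_T - W^*_T)$, and taking $\E_0$ together with $\E_0[\xi^*_TW_T]\le w = \E_0[\xi^*_TW^*_T]$ gives $\E_0[u(W_T)] \le \E_0[u(W^*_T)]$. The minimax property of $\xi^*$ is also encoded in \eqref{c3}, which is the first-order stationarity condition of the dual functional $\nu\mapsto \E_0[\tilde u(\lambda_0\xi^\nu_T)] + \lambda_0 w$ (with $\tilde u$ the convex conjugate of $u$), so $\xi^*$ minimizes the dual value across admissible penalties.

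The main technical obstacle will be the passage from formal It\^o computations to rigorous martingale statements: specifically, upgrading the nonnegative local martingales $\xi^*_t W^*_t$ and $\xi^\nu_t W_t$ to true (super)martingales on $[0,T]$, which is where the integrability hypothesis \eqref{eq:integral} and the sublinear, locally Lipschitz structure of $\nu^*$ in (i) are essential. A standard localization argument combined with a Novikov-type estimate on \eqref{BC_eps} should deliver this, but some care is needed because $\nu^*$ itself is defined implicitly through $F$ via \eqref{c3}.
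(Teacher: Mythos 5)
Your proposal is correct and follows essentially the same route as the paper's proof: It\^o on $Y_t=Y_0(\xi^*_t)^{-1}$ and on $F(Y_t,X_t,t)$, cancellation of the $Z^\bot$ component via \eqref{c3} to establish attainability, the integrability hypothesis \eqref{eq:integral} to upgrade $\xi^*_tF(Y_t,X_t,t)$ to a true martingale, and identification of $\theta^*$ by matching the $Z^S$ coefficients. The only difference is that you write out the standard concavity--supermartingale duality argument for optimality and the stationarity interpretation of \eqref{c3}, where the paper simply invokes He and Pearson (1991), Theorem 8.
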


\noindent
\begin{proof}
To show that $F(Y_t,X_t,t)$ is the optimal wealth process, we need to verify that the initial wealth satisfies the budget constraint  and that the
final wealth satisfies the first order condition \eqref{WT} and is attainable by  a self-financing strategy.

The budget constraint \eqref{static_BC} follows from \eqref{c2}, and the first  order condition from \eqref{c1}, since $g(y)=y^{-\frac{1}{\gamma}}$. Since $\xi^*_t$ is well defined by condition (i) and $Y_0$ is given by \eqref{c2}, we can define the process $Y_t:=Y_0 \left(\xi^*_t\right)^{-1}$.  From It\^o formula the dynamics of $Y_t$ is
\begin{align*}
\frac{d Y_t}{Y_t}=&(r+ X^2_t+(1-\rho^2)(\nu^*(Y_t, X_t, t))^2) d t + X_t dZ^S_t+\sqrt{1-\rho^2} \nu^*(Y_t, X_t, t) dZ_t^\bot,
\end{align*}
and, still applying It\^o,
\begin{align}
&F(Y_t, X_t, t)=F(Y_0, X_0, 0)+ \int_0^t\mathcal{L}F(Y_s, X_s, s) d s \nonumber \\
&\quad +\int_0^t\left(\rho \sigma_X F_X(Y_s, X_s, s)+Y_s X_s F_Y(Y_s, X_s, s)\right) d Z^S_s  \nonumber \\
&\quad  + \int_0^t\sqrt{1-\rho^2}\left(\sigma_X F_X(Y_s, X_s, s)+Y_s \nu^*(Y_s, X_s, s) F_Y(Y_s, X_s, s)\right) d Z^{\bot}_s,  \label{eqF}
\end{align}
where $\mathcal L$ is the differential operator
\begin{align*}
\mathcal L F=&F_t +\frac{1}{2}F_{YY}Y^2 \left( X^2 + {\nu^*(Y,X,t)}^2 (1-\rho^2) \right)\nonumber\\
& + \sigma_X F_{XY}Y \left(\rho X + {\nu^*(Y,X,t)} (1-\rho^2)\right)
\nonumber \\
&+ \frac{1}{2}F_{XX}\sigma_X^2  - F_X  \lambda(X-\bar X)+  F_Y Y (r+X^2+ (\nu^*(Y,X,t))^2(1-\rho^2)).\nonumber
\end{align*}
By \eqref{c3}, the integral with respect to $Z^\bot$ in \eqref{eqF}  vanishes, therefore the final wealth $F(Y_T, X_T, T)$ belongs to the space of attainable payoffs.
To show that it can be obtained  by a self-financing strategy starting from $w$
it remains to show that  the process $\xi^*_tF(Y_t,X_t, t)$ is a true martingale.
From  It\^o formula and  assumption \eqref{eqpde}, we get
\begin{align*}
&\xi^*_t F(Y_t, X_t, t)=F(Y_0, X_0, 0)\\
&+ \int_0^t\xi^*_s\left(\rho \sigma_X F_X(Y_s, X_s, s) +Y_s X_s F_Y(Y_s, X_s, s)-F(Y_s, X_s, s) X_s\right) d Z^S_s\\
& -\sqrt{1-\rho^2} \int_0^t\xi^*_s F(Y_s, X_s, s) \nu^*(Y_s, X_s, s) d Z^{\bot}_s,
\end{align*}
which is a true martingale because
\begin{align*}
&\E\left[\int_0^T\left((\xi^*_s)^2\left(\rho \sigma_X F_X(Y_s, X_s, s) +Y_s X_s F_Y(Y_s, X_s, s)-F(Y_s, X_s, s) X_s\right)^2\right.\right. \\
&\qquad \left.\left.+(1-\rho^2
)(\xi^*_s)^2 (F(Y_s, X_s, s))^2 (\nu^*(Y_s, X_s, s))^2 \right)dt\right]\\
&=\E\left[\int_0^T\!\!\!\left((\xi^*_s)^2\left( Y_s F_Y(Y_s, X_s, s) (X_s-\rho\nu^*(Y_s, X_s, s))-F(Y_s, X_s, s) X_s\right)^2\right.\right.\\
&\qquad \left.\left.+(1-\rho^2)(\xi^*_s)^2 (F(Y_s, X_s, s))^2 (\nu^*(Y_s, X_s, s))^2\right) dt\right]
\end{align*}
\begin{align*}
&\leq c_1\E\left[\int_0^T \left((X^2_s + (\nu^*(Y_s, X_s, s))^2) (F_Y(Y_s, X_s, s))^2 \right.\right.\\
&\qquad\left.\left.+(\xi_s^*)^2 (F(Y_s, X_s, s))^2(X^2_s + (\nu^*(Y_s, X_s, s))^2)\right)dt\right],
\end{align*}
that is bounded by \eqref{eq:integral} ($c_1$ is a positive constant). Note that
 the first equality comes from \eqref{c3}, and in the inequality  we have used $(a+b)^2\leq 2 (a^2+ b^2)$, $\rho^2<1$, $1-\rho^2<1$, and the definition of $Y_t$.

Therefore,  $W^*_t=F(Y_t, X_t, t)$ is the optimal wealth process and $\xi^*_t$ is the minimax state price density (see He and Pearson (1991) \cite[Theorem 8]{HeandP}). Finally, by equating the predictable quadratic covariations of $W^*$ and $Z^S$ computed from  \eqref{BC} and \eqref{eqF} we get the optimal strategy \eqref{eq:optimal_strategy}.
\end{proof}

To determine a closed form expression for $W^*_t$ we guess that the joint process  $(\log(\xi^*), X, X^2)$ is affine. From this guess it follows that the conditional expectation in  \eqref{eqWt} is
 \begin{equation*}\label{eq:expectation}
\E_t\left[{\xi_T^*}^{1-\frac{1}{\gamma}}\right] = {\xi^*_t}^{1-\frac{1}{\gamma}} e^{A(t)+B(t)X_t+\frac{1}{2}C(t)X_t^2}
\end{equation*}
where the functions $A(t)$, $B(t)$ and $C(t)$ satisfy the system of Riccati equations
\begin{equation}\label{eqa}
\left\{
\begin{split}
\frac{dC}{dt} &= -a - b C(t) - c C(t)^2,\\
\frac{dB}{dt} &=  -C(t)\lambda\bar{X} - \left(\frac{b}{2}+c C(t)\right)B(t),\\
\frac{dA}{dt} &= \frac{\gamma-1}{\gamma}r - B(t)\lambda\bar{X} - \frac{1}{2}C(t)\sigma_X^2 - \frac{1}{2}c B(t)^2
\end{split}\right.
\end{equation}
with boundary conditions
\begin{equation}
A(T) = B(T) = C(T) = 0 ,
\label{BC_Riccati}
\end{equation}
for constants
\begin{align*}
a = \frac{1-\gamma}{\gamma^2}, \quad b =2\left(-\lambda +\frac{1-\gamma}{\gamma}\rho\sigma_X\right), \quad
c = \sigma_X^2\left(\rho^2+\gamma(1-\rho^2)\right).\label{abc}
\end{align*}

To prove that our guess is correct we must show that it  satisfies Theorem \ref{VT}.
Before that, we discuss the behavior of the solutions to the problem \eqref{eqa}-\eqref{BC_Riccati} without reporting them explicitly, as they can be found, for instance, in  Kim and Omberg (1996) \cite{kim}.

Let us define $\displaystyle \Delta := b^2-4ac = 4\left(p-\frac{q}{\gamma}\right),$
where
$\displaystyle p := \lambda^2 + 2\lambda\rho\sigma_X+\sigma_X^2,$ and $\displaystyle q:= 2\lambda\rho\sigma_X+\sigma_X^2\label{q}.$
It is easily seen that $p > q$ and $p \geq 0$. In particular, if $\rho \neq -1$ and $\sigma_X\neq \lambda$, then $p \neq 0$.
We define the critical correlation value
\begin{equation}
\rho^* = -\frac{\sigma_X}{2\lambda}\vee -1,
\label{rho_crit}
\end{equation}
and, for $ \rho \geq\rho^*$, the critical risk aversion parameter
\begin{equation}
\gamma^* = \frac{q}{p}.
\label{gamma_crit}
\end{equation}
Note that $0 \leq \gamma^* < 1$, where the lower bound follows from the assumption on $\rho$.
According to the classification by Kim and Omberg (1996) \cite{kim}, there are four possible cases: 
\begin{itemize}
\item[i.] If $\rho \geq \rho^*$ and $\gamma^* < \gamma < 1 \cup \gamma > 1$, then
  $\Delta >0$  and
the solution, called ``well-behaved normal", exists for every $t$ in [0,T].
\item[ii.] If $\rho > \rho^*$ and $0 < \gamma < \gamma^*$, then $\Delta < 0$ . The  solution is called  ``tangent" and is defined on $[0,T^*)$, where
\begin{equation}
T^* = \frac{\pi}{\eta} - \frac{2}{\eta}\arctan{\frac{b}{\eta}}
\label{crit_t}
\end{equation}
with $\eta = \sqrt{-\Delta}$.
In such a case the investment horizon $T$ has to be lower than $T^*$ in order that solution exists over the entire interval $[0,T]$.
\item[iii.] If $\rho > \rho^*$ and $\gamma = \gamma^*$,  we get the ``well-behaved hyperbolic" solution which exists
for every $t$ in [0,T]. This case corresponds to $\Delta = 0$ and $b < 0$.
\item[iv.] If $\rho < \rho^*$ and $\gamma > 0, \gamma \neq 1$, then $q<0$ and hence $\Delta >0$.
The solution is  ``well-behaved normal" and exists
for every $t$ in $[0,T]$.
\end{itemize}
Wachter (2002) \cite{wachter} noted that, for real market data, the correlation between  returns and the market price of risk is
usually negative and close to -1, hence case (iv) should be the most relevant for financial applications.

After providing the conditions under which the system of Riccati equations has a solution, we can state some sufficient conditions  for  our guess to provide the optimal wealth and the optimal policy for the full information case.
\begin{theorem}\label{prop:OptSolFullInfo}
Let the functions $A(t)$, $B(t)$ and $C(t)$ satisfy \eqref{eqa}- \eqref{BC_Riccati} on $[0,T]$, let
\begin{eqnarray}
\nu^*_t & := & -\gamma  \left(B(t)+C(t)X_t\right)\sigma_X \label{nu_1},\\
\lambda_0 & := & \left[ \frac{e^{A(0)+B(0)X_0+\frac{1}{2}C(0)X_0^2}}{w}\right]^\gamma ,
\label{K}
\end{eqnarray}
and let $\xi^*_t$ be the state price density process associated to $\nu^*_t$.

If
\begin{equation}
\E\left[\int_0^T \left((\xi^*_t)^{1-\frac{1}{\gamma}}e^{A(t)+B(t)X_t+\frac{1}{2}C(t)X_t^2}\right)^2 (1+X_t^2) dt\right]<\infty,
\label{eq:integral2}
\end{equation}
then $\xi^*_t$ is the minimax state price density, $\lambda_0$ is the Lagrange multiplier for  problem \eqref{static_pr}-\eqref{static_BC}, and
\begin{equation}
W_t^* = \left(\lambda_0 \xi^*_t\right)^{-\frac{1}{\gamma}} e^{A(t)+B(t)X_t+\frac{1}{2}C(t)X_t^2},
\label{OptW_t_a}
\end{equation}
\begin{equation*}
\theta_t^* = \frac{1}{\gamma}\frac{X_t}{\sigma} +\rho\frac{\sigma_X}{\sigma}(B(t)+C(t)X_t)
\label{OptStr1}
\end{equation*}
are the optimal wealth and the optimal strategy.
\end{theorem}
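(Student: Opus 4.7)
The plan is to construct an explicit candidate $F(Y,X,t)$ out of the affine ansatz, show it satisfies the PDE and boundary conditions of Theorem~\ref{VT}, and then reduce the hypotheses (i)--(ii) of that theorem to the single integrability assumption~\eqref{eq:integral2}. The optimality claim for $W^*_t$ and $\theta^*_t$ then follows from Theorem~\ref{VT} by a direct translation of notation.

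First I would set
\begin{equation*}
F(Y,X,t) := Y^{1/\gamma}\, e^{A(t)+B(t)X+\tfrac{1}{2}C(t)X^2},
\end{equation*}
compute all partials, and check that $F_Y \neq 0$, $F(Y,X,T)=Y^{1/\gamma}$ (from $A(T)=B(T)=C(T)=0$), and $F(Y_0,X_0,0)=w$ (which pins down $Y_0=\lambda_0^{-1}$ with $\lambda_0$ as in~\eqref{K}). A direct calculation of $-\sigma_X F_X/(YF_Y)$ immediately gives $\nu^*(Y,X,t)=-\gamma\sigma_X(B(t)+C(t)X)$, confirming \eqref{nu_1} and showing that $\nu^*$ is in fact independent of $Y$ and affine in $X$.

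Next I would substitute $F$ and $\nu^*$ into the PDE~\eqref{eqpde}. After factoring out $F$, the equation becomes a quadratic polynomial in $X$ whose coefficients of $X^2$, $X^1$ and $X^0$ must vanish separately; each of these three conditions reproduces one of the Riccati ODEs in~\eqref{eqa} for the constants $a,b,c$ stated right below the system. This is the main computational step but is routine: it is essentially the derivation already sketched between \eqref{eqa} and \eqref{BC_Riccati}, with the sign of $\nu^*$ chosen precisely so that the $Z^\bot$-term in \eqref{eqF} vanishes.

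Then I would verify the two structural assumptions of Theorem~\ref{VT}. Condition (i) is immediate: since $A,B,C\in C^1([0,T])$, the function $\nu^*(Y,X,t)=-\gamma\sigma_X(B(t)+C(t)X)$ is globally Lipschitz and sublinear on $\R^+\times\R\times[0,T]$, so \eqref{BC_eps} defines a local martingale $\xi^*$. For condition (ii), using $Y_t=Y_0/\xi^*_t=\lambda_0^{-1}/\xi^*_t$ one finds
\begin{equation*}
Y_t^{-1}F(Y_t,X_t,t)=\lambda_0^{1-1/\gamma}\,(\xi^*_t)^{1-1/\gamma}\,e^{A(t)+B(t)X_t+\tfrac{1}{2}C(t)X_t^2},\qquad F_Y(Y_t,X_t,t)=\tfrac{1}{\gamma}\,Y_t^{-1}F(Y_t,X_t,t),
\end{equation*}
so both quantities appearing in \eqref{eq:integral} are proportional to the expression squared in \eqref{eq:integral2}. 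Moreover, because $B,C$ are bounded on $[0,T]$, there is a constant $c_2$ with $X_t^2+(\nu^*_t)^2\leq c_2(1+X_t^2)$. Hence \eqref{eq:integral2} implies \eqref{eq:integral}.

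Having verified both conditions, Theorem~\ref{VT} yields that $\xi^*$ is the minimax state price density and that $W_t^*=F(Y_t,X_t,t)$ is the optimal wealth; rewriting $Y_t=(\lambda_0\xi^*_t)^{-1}$ gives exactly \eqref{OptW_t_a}. The strategy formula \eqref{OptStr1} then follows by inserting $F_Y Y=\tfrac{1}{\gamma}F$ and $F_X=(B(t)+C(t)X)F$ into \eqref{eq:optimal_strategy}. I expect the only nontrivial obstacle to be the bookkeeping in step~2 (matching the polynomial coefficients in $X$ to the Riccati system with the precise constants $a,b,c$); everything else is a mechanical transcription between the $(Y,X)$-variables of Theorem~\ref{VT} and the $(\xi^*,X)$-variables used in the statement.
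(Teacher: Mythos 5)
Your proposal is correct and follows essentially the same route as the paper's own proof: define the affine candidate $F(Y,X,t)=Y^{1/\gamma}e^{A(t)+B(t)X+\frac{1}{2}C(t)X^2}$, check the PDE, boundary conditions and the form of $\nu^*$, and reduce condition \eqref{eq:integral} of Theorem \ref{VT} to \eqref{eq:integral2} via $Y_t=(\lambda_0\xi^*_t)^{-1}$, $F_Y=\frac{1}{\gamma}Y^{-1}F$ and the boundedness of $B,C$ on $[0,T]$. The only difference is that you spell out the coefficient-matching in $X$ that yields the Riccati system, which the paper simply asserts; this is a harmless elaboration.
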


\begin{proof}
Let us define
\begin{equation}\label{eq:F}
F(Y,X,t):=  {Y}^{\frac{1}{\gamma}} e^{A(t)+B(t)X+\frac{1}{2}C(t)X^2},
\end{equation}
for  $(Y,X,t)\in \R^+\times \R\times [0,T]$. We need to check that the function $F$ satisfies the assumptions of Theorem \ref{VT}, and hence the optimal wealth process is equal to $F(Y_t,X_t,t)$, where $Y_t := Y_0 (\xi_t^*)^{-1}$.

From assumption \eqref{nu_1} we see that $\nu^*$  verifies Equation \eqref{c3}; moreover, it is sublinear and locally Lipschitz for $(Y,X,t)\in \R^+ \times \R \times [0,T]$, hence  condition (i) of Theorem \ref{VT} is satisfied.
Since the functions $A(t)$, $B(t)$ and $C(t)$ satisfy \eqref{eqa},
function $F(Y,X,t)$ solves \eqref{eqpde}. Moreover $F_Y(Y,X,t)\neq 0$.
The boundary conditions \eqref{BC_Riccati} imply that
$$F(Y,X,T)={Y}^{\frac{1}{\gamma}},$$
and therefore condition \eqref{c1} is also true.
Moreover, imposing  the budget constraint
$${Y_0}^{\frac{1}{\gamma}} e^{A(0)+B(0)X_0+\frac{1}{2}C(0)X_0^2}=w$$
we get that the value of $Y_0 >0$ that satisfies condition \eqref{c2} is $Y_0 = {\lambda_0}^{-1}$, where
$\lambda_0$ is given by \eqref{K}.
From \eqref{WT} it follows that  $\lambda_0$  is the Lagrange multiplier for  problem \eqref{static_pr}-\eqref{static_BC}.

To check that condition \eqref{eq:integral} in Theorem \ref{VT} is satisfied  we use
the fact that  $Y_t=(\lambda_0\xi_t^*)^{-1}$, and the definitions of $F$ given in \eqref{eq:F} and of $\nu^*$ in \eqref{nu_1},  to get
\begin{align*}
&\E\left[\int_0^T\!\!\!\!\left((Y_t^{-1}F(Y_t, X_t, t))^2 + (F_Y(Y_t, X_t, t))^2\right) (X_t^2+(\nu^*(Y_t, X_t, t))^2) d t\right]\\
&=\E\left[\int_0^T\!\!\!\!\left((\lambda_0 \xi_t^*)^{1-1/\gamma}  e^{A(t)+B(t)X+\frac{1}{2}C(t)X^2})^2 \right) (X_t^2+\gamma^2 \sigma_X^2(B(t)+C(t)X_t)^2) d t\right]\\
&\leq c_2 \E\left[\int_0^T\!\!\!\!\left((\xi_t^*)^{1-1/\gamma}  e^{A(t)+B(t)X+\frac{1}{2}C(t)X^2})^2 \right) (1+X_t^2) d t\right]
\end{align*}
for some constant $c_2>0$, because $B(t)$ and $C(t)$ are continuous functions on $[0,T]$. The last term
 is bounded by assumption \eqref{eq:integral2}.  This completes the proof.
\end{proof}

The following result provides some conditions for Theorem \ref{prop:OptSolFullInfo} that are easier to check than \eqref{eq:integral2}.
\begin{proposition}\label{prop:bound} If at least one of the following two holds:
\begin{enumerate}
\item[(i)]
$\gamma>1$
\item[(ii)] the functions $A(t)$, $B(t)$ and $C(t)$ satisfy \eqref{eqa}- \eqref{BC_Riccati} on $[0,T]$ and
\begin{equation}\label{eq:cond0}
1-4C(0)\max\left(R_0, R_0e^{-2\lambda T}+\frac{\sigma_X^2}{2\lambda}(1-e^{-2\lambda T})\right)>0.
\end{equation}
\end{enumerate}
Then all assumptions of  Theorem \ref{prop:OptSolFullInfo} are verified.
\end{proposition}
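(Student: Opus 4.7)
The plan is to verify the integrability condition \eqref{eq:integral2} by exploiting the explicit form of $W_t^*$ given by \eqref{OptW_t_a}. The starting identity is
\[
(\xi_t^*)^{1-1/\gamma}e^{A(t)+B(t)X_t+\frac{1}{2}C(t)X_t^2}=\lambda_0^{1/\gamma}\xi_t^*W_t^*,
\]
so, by Fubini, the condition \eqref{eq:integral2} becomes a uniform-in-$t$ bound on $\E[(\xi_t^*W_t^*)^2(1+X_t^2)]$. Applying Cauchy--Schwarz, this is implied by a uniform bound on $\E[(\xi_t^*W_t^*)^4]$, since the Gaussian OU tails of $X_t$ make $\E[(1+X_t^2)^2]$ uniformly bounded on $[0,T]$.

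The fourth moment is tractable because, from the proof of Theorem \ref{VT}, $\xi_t^*W_t^*=w\,\mathcal E(M)_t$ for a local martingale $M$ with $dM_t=\alpha_t dZ_t^S+\beta_t dZ_t^\bot$ whose integrands $\alpha_t$, $\beta_t$ are affine in $X_t$ (they are read off from the diffusion coefficients in \eqref{eqF} together with the explicit form of $F$). Writing $(\xi_t^*W_t^*)^4=\mathcal E(4M)_t\exp(6\int_0^t(\alpha_s^2+\beta_s^2)ds)$ and applying Girsanov with $d\mathbb Q/d\P|_{\F_t}=\mathcal E(4M)_t$, the task reduces to evaluating $\E^{\mathbb Q}[\exp(6\int_0^tQ_s(X_s)ds)]$, where $Q_s$ is quadratic in $X_s$ and $X$ remains a time-inhomogeneous Gauss--Markov affine diffusion under $\mathbb Q$. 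An affine Feynman--Kac argument then produces an expression of the form $\exp(\tilde A(t)+\tilde B(t)X_0+\tfrac12 \tilde C(t) X_0^2)$, whose finiteness on $[0,T]$ is equivalent to a uniform positivity condition of the shape $1-4C(t)v_t>0$, where $v_t$ denotes the unconditional variance of $X_t$. The factor $4$ arises from the Cauchy--Schwarz doubling together with the standard Gaussian MGF rule $\E[e^{\alpha Z^2}]<\infty\Longleftrightarrow 1-2\alpha v>0$.

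For case (i) $\gamma>1$, the Riccati ODE $C'=-a-bC-cC^2$ has $a=(1-\gamma)/\gamma^2<0$ and $C(T)=0$, so $C'(T)=-a>0$; a backward-in-time monotonicity argument forces $C(t)\le 0$ on $[0,T]$. The Gaussian-quadratic factor is then pointwise bounded by $1$, and no positivity constraint is needed. For case (ii), the same analysis gives $C(t)\ge 0$ with $C$ decreasing in $t$, so $C(0)=\max_{[0,T]}C(t)$. Moreover, the unconditional variance $v_t=R_0 e^{-2\lambda t}+\frac{\sigma_X^2}{2\lambda}(1-e^{-2\lambda t})$ is monotone in $t$, so its maximum over $[0,T]$ equals $\max(R_0,\,R_0 e^{-2\lambda T}+\frac{\sigma_X^2}{2\lambda}(1-e^{-2\lambda T}))$. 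Hypothesis \eqref{eq:cond0} is then precisely the uniform-in-$t$ version of the Gaussian MGF finiteness condition $1-4C(t)v_t>0$, which closes the argument.

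The main technical obstacle will be performing the Girsanov change of measure while preserving the affine Gaussian structure of $X$ under the new measure, so that the resulting Feynman--Kac expectation admits a closed-form Gaussian evaluation, and verifying that $\mathcal E(4M)$ is a true martingale rather than merely a local one (for which a Novikov or direct exponential-moment estimate, exploiting the affine structure of $\alpha$ and $\beta$, suffices). Once these steps are in place, the condition \eqref{eq:cond0} emerges directly from the positivity requirement of the quadratic form in the Gaussian integrand, and the case dichotomy collapses to the monotonicity analysis of $C$.
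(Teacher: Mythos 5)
Your endgame (the monotonicity of $C$, the sign dichotomy in $\gamma$, the monotone variance $v_t$, and the reading of \eqref{eq:cond0} as the uniform-in-$t$ version of $1-4C(t)v_t>0$) matches the paper's proof. But the route by which you arrive at the claim that ``$1-4C(t)v_t>0$ for all $t$'' is the relevant finiteness condition contains a genuine gap. After writing $(\xi_t^*W_t^*)^4=\mathcal E(4M)_t\exp\bigl(6\int_0^t(\alpha_s^2+\beta_s^2)\,ds\bigr)$ and changing measure, you are left with $\E^{\mathbb Q}\bigl[\exp\bigl(6\int_0^t Q_s(X_s)\,ds\bigr)\bigr]$, i.e.\ the exponential of a \emph{time-integrated} quadratic functional of a Gauss--Markov process. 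The finiteness of such a functional is governed by the non-explosion of an auxiliary Riccati ODE whose coefficients involve the factor $6$ and the coefficients of $Q_s$; there is no reason its blow-up threshold coincides with the pointwise Gaussian MGF condition $1-4C(t)v_t>0$, and you give no derivation that it does. (Moreover, the Feynman--Kac output is conditional on $X_0$, which is $N(\pi_0,R_0)$, so integrating it out imposes a condition $1-2\tilde H(t)R_0>0$ on the \emph{new} Riccati solution $\tilde H$, not on $C$.) A second, related gap: you defer the true-martingale property of $\mathcal E(4M)$ to ``Novikov or a direct exponential-moment estimate,'' but Novikov here requires $\E[\exp(8\int_0^T(\alpha_s^2+\beta_s^2)ds)]<\infty$, again an exponential moment of a time-integrated quadratic of $X$ --- precisely the kind of quantity whose finiteness you are trying to establish, so the argument is circular at that point. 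Note also that if you instead estimated $\E[(\xi_t^*W_t^*)^4]$ directly from \eqref{OptW_t_a} and separated factors by Cauchy--Schwarz, you would need $\E[e^{8B(t)X_t+4C(t)X_t^2}]<\infty$, i.e.\ $1-8C(t)v_t>0$, which is strictly stronger than \eqref{eq:cond0}.

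The paper avoids all of this by never forming the fourth moment of the product. It applies a three-factor H\"older inequality at each fixed $t$ to
$\E\bigl[({\xi^*_t}^{1-1/\gamma})^2\,e^{2B(t)X_t+C(t)X_t^2}\,(1+X_t^2)\bigr]$, bounding it by
$\E[{\xi^*_t}^{8(1-1/\gamma)}]^{1/4}\,\E[e^{4B(t)X_t+2C(t)X_t^2}]^{1/2}\,\E[(1+X_t^2)^4]^{1/4}$.
The only exponential-quadratic moment that appears is then the \emph{single-time} Gaussian MGF of $X_t$, whose finiteness is exactly $1-4C(t)v_t>0$ by the elementary Gaussian formula; the state-price-density moment is disposed of separately. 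If you want to salvage your argument, replace the Girsanov/Feynman--Kac computation with this single-time H\"older separation; your concluding monotonicity analysis then goes through verbatim.
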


\begin{proof}
Recall that for $\gamma>1$ the functions $A,B,C$ are well defined on $[0,T]$.

Then, we only need to show that condition \eqref{eq:integral2} is satisfied.
By Cauchy Schwartz inequality, using Fubini and $Y_t=(\lambda_0 \xi^*_t)^{-1}$,
\begin{align*}
&\E\left[\int_0^T e^{2A(t)} \left({\xi^*_t}^{1-\frac{1}{\gamma}}\right)^2 e^{2B(t)X_t+C(t)X_t^2} (1+X_t^2) dt  \right]\\
&\quad \leq \kappa \int_0^T e^{2A(t)} \E\left[{\xi^*_t}^{8(1-\frac{1}{\gamma})}\right]^{\frac{1}{4}} \E\left[e^{4B(t)X_t+2C(t)X_t^2}\right]^{\frac{1}{2}} \E[(1+X_t^2)^4]^{\frac{1}{4}} dt,
\end{align*}
where $k$ is a positive constant. Considering each expectation separately, first we have
\begin{equation*}\label{eq:terza}
\E\left[{\xi^*_t}^{8(1-\frac{1}{\gamma})}\right]<\infty,\end{equation*}
for every $t \in [0,T]$, since $X$ is an Ornstein-Uhlenbeck process (see, e.g. Revuz and Yor (2013) \cite[Chapter 8, Ex. 3.14]{revuz}).
Second,
\[
\E[(1+X_t^2)^4]<\infty
\]
for every $t\in [0,T]$, since $X_t$ is a Gaussian random variable and hence has moments of all orders.
Finally
$$\E\left[e^{4B(t)X_t+2C(t)X_t^2}\right]<\infty$$
for every $t \in [0,T]$ if and only if $1-4C(t)v_t>0$, where $v_t=R_0e^{-2\lambda t}+\frac{\sigma_X^2}{2\lambda}(1-e^{-2\lambda t})$ is the variance of $X_t$. \footnote{\label{footnoteE} For a random variable  $\varepsilon\sim\mathcal{N}(\mu,\sigma^2)$,  if $1-c\sigma^2 >0$,
$\E[e^{a+b\varepsilon+\frac{1}{2}c \varepsilon^2}] = \frac{1}{\sqrt{1-c\sigma^2}} \exp{\left(a +\frac{b^2\sigma^2}{2(1-c \sigma^2)} + \frac{b \mu}{1-c\sigma^2}+
\frac{c\mu^2}{2(1-c
\sigma^2)}\right)}$.}

To show that $1-4C(t)v_t>0$ we use that $C(t)$ is strictly negative and increasing on [0,T] if $\gamma > 1$, and is strictly positive and decreasing if $\gamma <1$ (see Kim and Omberg (1996) \cite[Equation (23)]{kim}).
Then, for $\gamma>1$, $C(t)<0$, therefore $1-4C(t)v_t>0$. When $\gamma<1$,  $C(t)$ is positive and decreasing, hence $C(t)<C(0)$. Moreover, let $v_\infty:=\frac{\sigma^2_X}{2\lambda}$, then $v_t$ is increasing and $R_0 \leq v_t\leq v_T$ if $R_0<v_\infty$ and decreasing  with $v_T \leq v_t\leq R_0$ otherwise. This means that  $1-4C(t)v_t>1-4C(0)\max\left(R_0, v_T\right)$. The result then follows from \eqref{eq:cond0}.
\end{proof}

Condition \eqref{eq:cond0} can be easily verified on any set of parameters, but it is more
restrictive than Condition \eqref{eq:integral2}, that is more difficult to check. In the section devoted to the applications we
show graphically, in
Figure \ref{Criticaltime}, how much restrictive Condition \eqref{eq:cond0} is with respect to the  domain of existence of the corresponding system of Riccati equations.

Kim and Omberg (1996) \cite{kim}  and  Brendle (2006) \cite{brendle2006}  solved a problem similar to ours by using the Hamilton-Jacobi-Bellman (HJB) approach.
To recover their results, we compute the expected optimal utility at time $t$
\begin{eqnarray}
\E_t \left[ u(W_T^*)\right] &= & \frac{1}{1-\gamma} \lambda_0^{1-\frac{1}{\gamma}}
 \E_t\left[(\xi_T^*)^{1-\frac{1}{\gamma}}\right]  \label{eq:EU_t_1} \\
 & = &  \frac{1}{1-\gamma}  \lambda_0 \xi_t^* W_t^*  \label{eq:EU_t_2} \\
 & = &  \frac{1}{1-\gamma}  {W^*_t}^{1-{\gamma}} e^{\gamma(A(t)+B(t)X_t+\frac{1}{2}C(t)X_t^2)}, \label{eq:EU_t}
  \end{eqnarray}
  where \eqref{eq:EU_t_1} follows from \eqref{WT}, \eqref{eq:EU_t_2} from \eqref{eqWt}, and \eqref{eq:EU_t} from \eqref{OptW_t_a}.
Equation  \eqref{eq:EU_t} corresponds to the  formulas \cite[Equation (16)]{kim} and  \cite[Equation (14)]{brendle2006}.

By plugging \eqref{OptW_t_a} into \eqref{eq:EU_t}, we can also compute the conditional expected optimal utility as
\begin{equation}
\E_t\left[u(W_T^*)\right] =  \frac{1}{1-\gamma}{\left(\lambda_0 \xi^*_t\right)} ^{1-1/\gamma}
e^{A(t)+B(t)X_t+\frac{1}{2}C(t)X_t^2}.
\label{phisT}\end{equation}
An advantage of  the martingale approach  over HJB  is that it  allows us  to compute both the optimal wealth  \eqref{OptW_t_a} and the expected optimal utility \eqref{phisT} as  functions of the minimax state price density $\xi^*$ and of the market price of risk $X$. This may be useful to study the dependence on the current state of the market.

From \eqref{eq:EU_t}, we can also derive the (unconditional) expected optimal utility, that exists when
$$Q(0):=1-\gamma C(0)R_0
$$
is strictly positive under the hypotheses of Proposition \ref{prop:bound}. \footnote{In the next section we will provide further conditions for the positiveness of $Q(0)$ (see Proposition \ref{prop:Q}).}
Indeed, since $X_0\sim N(\pi_0, R_0)$, and using the formula provided in Footnote \ref{footnoteE}, we get
\begin{eqnarray}
\E \left[u(W_T^*)\right] & = & \E  \E_0 \left[u(W_T^*)\right] =  \frac{w^{1-\gamma}}{1-\gamma} \E  \left[  e^{\gamma(A(0)+B(0)X_0+\frac{1}{2}C(0)X_0^2)} \right] \nonumber\\
&=& \frac{w^{1-\gamma}}{(1-\gamma) \sqrt{Q(0)}} e^{\gamma A(0)+\frac{\gamma}{2Q(0)}\left( \gamma B(0)^2 R_0 +
2\pi_0 B(0) + C(0) \pi_0^2 \right)}. \label{eq:gaussian}
\end{eqnarray}


To study the conditional distribution of the optimal wealth we
 compute  the conditional moment generating function of  $\ln W^*_t$,
\begin{equation}
\phi_s(t,z) := \E_s\left[(W_t^*)^z\right],\label{MGF_lnW1}
\end{equation}
on its domain of existence.

\begin{proposition}\label{prop:MGF_FI}
Let $\phi_s(t,z)<\infty$  for
$0\leq s \leq t \leq T$ and $z>0$. Then
\begin{equation}
\phi_s(t,z) = \left( \lambda_0 \xi^*_s \right)^{-\frac{z}{\gamma}}e^{D(s;t,z)+E(s;t,z)X_s+\frac{1}{2}H(s;t,z)X^2_s}
\label{MGF_lnW3}
\end{equation}
 where the functions $D:[0,t]\to \R$, $E:[0,t]\to \R$ and $H:[0,t]\to \R$ satisfy the  system of differential equations\footnote{The dependence on $t$ and $z$ for the functions
 $D,E,H$ in system \eqref{system:DEF} is omitted for  ease of notation.}
\begin{equation}\label{system:DEF}
\left\{
\begin{split}
\frac{dH}{ds} &= d(s)+2e(s)H(s)-\sigma_X^2H(s)^2\\
\frac{dE}{ds} &= f(s) + \left(e(s)-\sigma_X^2H(s)\right)E(s)+g(s)H(s)\\
\frac{dD}{ds} &= h(s) + g(s)E(s) -\frac{\sigma_X^2}{2}(H(s)+E
(s)^2)
\end{split}\right.
\end{equation}
with boundary conditions
\begin{equation}
D(t) = zA(t), \qquad E(t) = zB(t), \qquad H(t) = zC(t),\label{BC_mgf_FI}
\end{equation}
and
\begin{eqnarray}
d(s) &=& -\left(\frac{z^2}{\gamma^2}+\frac{z}{\gamma}\right)\left(1+\gamma^2\sigma_X^2(1-\rho^2)C(s)^2\right)\nonumber\\
e(s) &=& \lambda-\frac{z}{\gamma}\sigma_X\rho + z \sigma_X^2(1-\rho^2)C(s)\nonumber\\
f(s) &=& -(z^2+z\gamma)\sigma_X^2(1-\rho^2)B(s)C(s)\nonumber\\
g(s) &=& z\sigma_X^2(1-\rho^2)B(s) -\lambda \bar X\nonumber\\
h(s) &=& -\frac{1}{2}(z^2+z\gamma)\sigma_X^2(1-\rho^2)B(s)^2-\frac{zr}{\gamma}\nonumber
\end{eqnarray}
where the functions $A(\cdot), B(\cdot)$ and $C(\cdot)$ solve  \eqref{eqa}--\eqref{BC_Riccati}.
\end{proposition}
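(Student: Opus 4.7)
The plan is to identify $\phi_s(t,z)$ with the value at time $s$ of a martingale built from an exponential-affine-quadratic ansatz in the Markov state $(\xi^*_s, X_s)$, then match coefficients to read off the ODEs. Inserting the explicit form \eqref{OptW_t_a} of $W^*_t$ into \eqref{MGF_lnW1} gives
\[
(W^*_t)^z = (\lambda_0\xi^*_t)^{-z/\gamma} \exp\!\Bigl(zA(t) + zB(t)X_t + \tfrac{z}{2}C(t)X_t^2\Bigr),
\]
which motivates the candidate
\[
M_s := (\lambda_0\xi^*_s)^{-z/\gamma} \exp\!\Bigl(D(s) + E(s)X_s + \tfrac12 H(s) X_s^2\Bigr),
\]
and the boundary conditions \eqref{BC_mgf_FI} are exactly what is needed to secure $M_t = (W^*_t)^z$.

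Next I would apply It\^o to $M_s$. Writing $M_s = U_sV_s$ with $U_s := (\lambda_0\xi^*_s)^{-z/\gamma}$ and $V_s := \exp(D(s) + E(s)X_s + \tfrac12 H(s) X_s^2)$, I would compute $dU/U$ from \eqref{BC_eps} with $\nu^*_s$ given by \eqref{nu_1}, $dV/V$ from the $X$-dynamics decomposed as $dX_s = -\lambda(X_s-\bar X)\,ds + \sigma_X\rho\,dZ^S_s + \sigma_X\sqrt{1-\rho^2}\,dZ^\bot_s$, and then the cross covariation $d\langle U, V\rangle/(UV)$. Requiring that the drift of $dM_s/M_s$ vanish identically in $X_s$ produces a polynomial identity in $X_s$ with three independent coefficients. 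Collecting the $X_s^2$, $X_s^1$ and $X_s^0$ parts respectively yields scalar ODEs in $s$ for $H$, $E$ and $D$. A direct match against \eqref{system:DEF} verifies that the coefficient functions are precisely the $d(s), e(s), f(s), g(s), h(s)$ given in the statement, with the terms involving $B(s)$ and $C(s)$ arising because $\nu^*_s = -\gamma\sigma_X(B(s)+C(s)X_s)$ is itself affine in $X_s$ (so $(\nu^*_s)^2$ contributes an $X_s^2$ piece to the drift of $U$).

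Finally, $M$ is by construction a local martingale whose terminal value is $(W^*_t)^z$, so $\phi_s(t,z) = \E_s[M_t]$. The assumption $\phi_s(t,z) < \infty$ together with the Gaussian integrability of $X$ and the integer moment bounds on $\xi^*$ (as in the proof of Proposition \ref{prop:bound}) give a uniformly integrable dominating majorant for the family $\{M_\tau : \tau\in[s,t]\}$, upgrading $M$ from local to true martingale on $[s,t]$ and yielding $\phi_s(t,z) = M_s$, i.e.\ \eqref{MGF_lnW3}. The main obstacle is the bookkeeping of the drift expansion: distinguishing the $dZ^S$ and $dZ^\bot$ diffusion coefficients of $U$ and $V$, and correctly tracking how the $\rho$ versus $\sqrt{1-\rho^2}$ pieces combine in $d\langle U,V\rangle$, so that the $(1-\rho^2)$ factors line up exactly with those appearing in $d(s),e(s),f(s),g(s),h(s)$. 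The calculation is routine but must be done patiently because each of $d, e, f, g, h$ picks up contributions from three distinct sources (the drift of $U$, the It\^o correction from $V$, and the cross covariation).
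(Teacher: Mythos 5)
Your proposal is correct and follows essentially the same route as the paper: write $\phi_s(t,z)$ as an exponential-affine-quadratic function of $(\xi^*_s,X_s)$ using \eqref{OptW_t_a}, apply It\^o, set the drift to zero, and collect the $X^0$, $X^1$, $X^2$ coefficients to obtain \eqref{system:DEF} with boundary conditions \eqref{BC_mgf_FI}. The only cosmetic difference is the logical direction of the martingale step: the paper observes that the conditional expectation is a martingale by the tower property and deduces the PDE, whereas you build the candidate from the ODEs and then upgrade the local martingale to a true one, which is an equivalent argument at the paper's level of rigor.
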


\begin{proof}

From \eqref{OptW_t_a} and the fact that process $(\ln \xi^*,X,X^2)$ is affine, it follows that
\begin{eqnarray*}
\phi_s(t,z) &=& \lambda_0^{-z/\gamma}  \E_s\left[{\xi_t^*}^{-z/\gamma}e^{zA(t)+zB(t)X_t+\frac{1}{2}zC(t)X_t^2}\right] = \lambda_0^{-z/\gamma} G(s, \xi^*_s, X_s;t,z)
\label{MGF_lnW2}
\end{eqnarray*}
where
\begin{equation}
G(s,\xi^*_s,X_s;t,z) = {\xi^*_s}^{-z/\gamma}e^{D(s;t,z)+E(s;t,z)X_s+\frac{1}{2}H(s;t,z)X^2_s}.
\label{MGF1}
\end{equation}
The boundary conditions \eqref{BC_mgf_FI} follow from $\phi_t(t,z)  = {W_t^*}^z$.
The function $G(s, \xi, x; t, z)$ is differentiable with respect to $s$, and twice differentiable with respect to $\xi$ and $x$.
Moreover, by definition, the process $(G(s, \xi_s, X_s;t,z))_{\{s\in[0,t]\}}$
is a martingale. Hence, by applying It\^{o}'s formula we get
\begin{align}
0=&\frac{\partial G}{\partial s}-r\xi\frac{\partial G}{\partial \xi}-\lambda(x-\bar X)\frac{\partial G}{\partial x}\nonumber \\
&+\frac{1}{2}\left(\frac{\partial^2 G}{\partial x^2} \sigma_X^2+ \xi^2 \frac{\partial^2 G}{\partial \xi^2} (\nu^2(1-\rho^2)+x^2)+2\xi \sigma_X \frac{\partial^2 G}{\partial \xi
\partial x}(-\nu(1-\rho^2) - x \rho)\right). \label{eq:gen0} \end{align}
By plugging  \eqref{MGF1} into Equation \eqref{eq:gen0} and collecting the constant term and the factors of  $X$ and $X^2$,  we get that $D,E,H$ are the unique solution to problem \eqref{system:DEF}-\eqref{BC_mgf_FI}  (see Filipovi\'c (2009) \cite[Lemma 10.1]{filipovic2009}).
\end{proof}


We note that the solution to problem \eqref{system:DEF}-\eqref{BC_mgf_FI} assumes a simple form in the special case corresponding to the computation of the conditional expectation of ${W^*_T}^{1-\gamma}$. In fact, from \eqref{MGF_lnW1} and \eqref{MGF_lnW3}, we get
\begin{equation*}
\E_t\left[u(W_T^*)\right] = \frac{1}{1-\gamma} \phi_t(T,1-\gamma) = \frac{1}{1-\gamma}
\left( \lambda_0 \xi^*_t \right)^{1-1/\gamma}e^{D(t;T,1-\gamma)+E(t;T,1-\gamma)X_t+\frac{1}{2}H(t;T,1-\gamma)X^2_t}.
\end{equation*}
Hence, from \eqref{phisT},
\begin{eqnarray*}
D(t;T,1-\gamma) & = & A(t), \\
E(t;T,1-\gamma) & = & B(t), \\
H(t;T,1-\gamma) & = & C(t).
\end{eqnarray*}
Such relations can also be directly verified by substituting $z=1-\gamma$ and $t=T$ in  \eqref{system:DEF}-\eqref{BC_mgf_FI}.

\section{Optimal investment under  partial information}\label{sec:partial_info}
In this section we assume that the investor observes only the stock prices and not the market price of risk. Hence, the available information is
carried by the filtration $\bF^S$ generated by the process $S$ and the  investor can only adopt $\bF^S$-adapted portfolio strategies. Here the standard procedure is to apply separability and transform the optimization problem under partial information
into an equivalent one  by means of filtering, see, e.g. Fleming and Pardoux (1982) \cite{fleming1982optimal}. The first step of this procedure consists of replacing the
unobservable quantities  by their filtered estimates. In this way, the dynamics of stock price and of the ``filtered" market price of risk  turn out to be  driven by a single, one-dimensional Brownian motion, the so called {\em Innovation process}. Hence,
after this transformation, we are in a  complete market model and, in the second step of the procedure,  we can solve the
optimization problem by following the standard martingale approach.

Let us consider the information filtration $\mathbb{F}^S:=(\F^S_t)_{t \in [0,T]}$, where, at any time $t \in [0,T]$, $\F^S_t:=\sigma\{S_u, \ 0\leq u
\leq t\}\vee \mathcal N$ and $\mathcal N$ is the collection of $\P$-null sets. We recall that $\F^S_0$ is the trivial $\sigma$-algebra.
We denote by $\pi$ the conditional expectation of $X$, given the information flow, that is $\pi_t= \E\left[X_t | \F^S_t\right],$
for every $t \in [0,T]$ and by $R$ the conditional variance, $R_t:= \E\left[\left(X_t-\E[X_t|\F^S_t]\right)^2|\F^S_t\right]$ for every $t \in
[0,T]$. It is well known that the conditional distribution of $X$  is Gaussian and hence completely identified  by the dynamics of expectation
and variance.

To characterize these dynamics we introduce the innovation process  $I=\{I_t, \ t \in
[0,T]\}$,
\begin{equation*}\label{eq:innovation}
I_t:= Z^S_t+\int_0^t (X_u-\pi_u) d u,
\end{equation*}
for every $t \in [0,T]$. Following Lipster and Shiryaev (2001) \cite[Chapter 10]{lipster2001statistics}, it can be proved that $I$ is an $(\bF^S, \P)$-Brownian motion and that the  processes $\pi$ and $R$  are the unique solutions
to the system
\begin{equation*}\label{eq:cond_exp}
d \pi_t=-\lambda (\pi_t- \bar X) d t + (R_t+\rho\sigma_X) d I_t, \quad \pi_0\in \R,
\end{equation*}
\begin{align}\label{eq:R}
d R_t = \left[\sigma_X^2 -2\lambda R_t -(R_t+\rho\sigma_X)^2\right] d t, \quad R_0\in \R^+.
\end{align}
From equation \eqref{eq:R}, we see that $R_t$ is a deterministic function of time. Therefore to emphasise this fact, from now on we will write $R(t)$ instead of $R_t$.

The semimartingale representations with respect to the information filtration $\bF^S$ of the  stock
price process and of the wealth produced by a strategy $\theta$  are
\begin{align}
\frac{d S_t}{S_t}&= (r + \sigma \pi_t) d t +  \sigma d I_t, \nonumber \\
\frac{d W_t}{W_t}&= (r + \theta_t \sigma\pi_t) d t + \theta_t \sigma dI_t.\label{eq:wealthPI}
\end{align}

The investor wants to solve the problem
\begin{equation*}\label{pr:opt_partial}
\max_{\theta \in \mathcal{A}^{S}(w)}\E\left[\frac{1}{1-\gamma}W_T^{1-\gamma}\right]
\end{equation*}
where $\mathcal{A}^{S}(w)$ is the set of $\bF^S$-predictable self-financing strategies satisfying the  integrability condition \eqref{eq:integrability} with initial wealth $w$. The state price density process in this case is unique and is given by
\begin{equation*}
\frac{d\tilde \xi_t}{\tilde \xi_t} = -r dt - \pi_t dI_t, \quad \tilde \xi_0=1.
\end{equation*}
By the martingale method we  formulate the equivalent
static problem
\begin{equation}
\max_{ W_T} \E[u( W_T)],
\label{static_pr_pi}
\end{equation}
subject to the constraint
\begin{equation}
w = \E[\tilde\xi_T W_T].
\label{static_BC_pi}
\end{equation}
We note that, since $\bF^S$-predictable strategies are also $\bF$-predictable, the optimal
utility under partial information is always lower than that under full information, and hence if problem \eqref{static_pr}-\eqref{static_BC} is bounded, problem
\eqref{static_pr_pi}-\eqref{static_BC_pi}  is also bounded.

By the usual Lagrangian approach, since $\tilde \xi$ is the state price density process, the optimal wealth satisfies
$$
\tilde W^*_t=\tilde \lambda_0^{-\frac{1}{\gamma}} \tilde{\xi}_t^{-1}\E[\tilde{\xi}_T^{1-\frac{1}{\gamma}}|\mathcal F^S_t]
$$
where $\tilde \lambda_0$ is the Lagrangian multiplier from the budget constraint \eqref{static_BC_pi}.

We can now state a verification theorem for the partial information setting.

\begin{theorem}[Verification Theorem under partial information] \label{VTpartial}
Let the function $ F( Y, \pi, t)$ solve the equation
\begin{eqnarray}
&\frac{1}{2}&  F_{ Y  Y} Y^2 \pi^2 +  F_{\pi  Y} Y \pi (R+\rho \sigma_X) +\frac{1}{2} F_{\pi \pi}(R+\rho \sigma_X)^2 +
F_t
\nonumber \\
&=&  r  F - r  F_{ Y}  Y  + F_\pi \left((R+\rho \sigma_X)\pi  + \lambda(\pi-\bar X) \right) \label{eq:pde_partial}
\end{eqnarray}
with  boundary conditions
\begin{eqnarray}
 F( Y,\pi,T) =  Y^{\frac{1}{\gamma}}, \quad and \quad  F( Y_0, \pi_0,0)= w \label{c2partial}
\end{eqnarray}
for some constant $ Y_0 > 0$ and $ F( Y,\pi,t)\rightarrow  F( Y,\pi,T)$ as $t\rightarrow T$.

Let   $ Y_t:=Y_0 \left( \tilde \xi_t\right)^{-1}$  and assume that
\begin{equation}\label{eq:c4}
\E\left[\int_0^T \left((F_{ Y}( Y_t,\pi_t,t)\pi_t)^2+( Y_t^{-1}  F_{\pi}( Y_t,\pi_t,t))^2+ ( Y_t^{-1} \pi_t  F( Y_t,\pi_t,t))^2\right) dt \right]<\infty.
\end{equation}

Then  the optimal wealth is
$\tilde W^*_t= F( Y_t,\pi_t,t)$ and  the optimal investment strategy is
\begin{align}\label{eq:optimal_strategy_p}
\tilde \theta_t^*=\frac{ F_{ Y}( Y_t, \pi_t, t) Y_t \pi_t + (R(t)+\rho \sigma_X)  F_{\pi}( Y_t,
\pi_t, t)}{\sigma  F(Y_t, \pi_t, t)},
\end{align}
for all $t \in [0,T]$.
\end{theorem}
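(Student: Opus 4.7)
The plan is to mirror the argument of Theorem \ref{VT}, in the simpler complete-market setting arising from the filtering reduction where the unique state price density $\tilde\xi$ removes the need for any penalization process. First, I would fix $Y_0>0$ so that the second condition in \eqref{c2partial} holds (this $Y_0$ will play the role of $\tilde\lambda_0^{-1}$) and compute the dynamics of $Y_t:=Y_0\tilde\xi_t^{-1}$ by It\^o's formula, obtaining
\[
\frac{dY_t}{Y_t}=(r+\pi_t^2)\,dt+\pi_t\,dI_t,
\]
so that $Y_t$ is $\bF^S$-adapted and driven only by the innovation $I$.

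Next, I would apply It\^o's formula to $F(Y_t,\pi_t,t)$, using the above dynamics together with $d\pi_t=-\lambda(\pi_t-\bar X)\,dt+(R(t)+\rho\sigma_X)\,dI_t$ and the cross-variation $d\langle Y,\pi\rangle_t=Y_t\pi_t(R(t)+\rho\sigma_X)\,dt$. Substituting the PDE \eqref{eq:pde_partial} into the drift and simplifying collapses the equation to
\begin{align*}
dF(Y_t,\pi_t,t)=F\bigl(r+\tilde\theta^*_t\sigma\pi_t\bigr)\,dt+F\tilde\theta^*_t\sigma\,dI_t,
\end{align*}
with $\tilde\theta^*_t$ given exactly by \eqref{eq:optimal_strategy_p}. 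Comparing this with the wealth equation \eqref{eq:wealthPI} identifies $F(Y_t,\pi_t,t)$ with the wealth of the self-financing $\bF^S$-predictable strategy $\tilde\theta^*$, while the terminal condition $F(Y,\pi,T)=Y^{1/\gamma}$ combined with $Y_T=\tilde\lambda_0^{-1}\tilde\xi_T^{-1}$ yields $\tilde W^*_T=(\tilde\lambda_0\tilde\xi_T)^{-1/\gamma}$, which is precisely the first-order optimality condition from Lagrangian duality.

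To conclude, it remains to show that $\tilde\xi_t F(Y_t,\pi_t,t)$ is a true martingale, which in the complete-market setting suffices for attainability from initial wealth $w$ and hence for optimality. A further application of It\^o's formula, using $d\tilde\xi_t=-r\tilde\xi_t\,dt-\tilde\xi_t\pi_t\,dI_t$, produces a driftless SDE with diffusion coefficient $\tilde\xi_t\bigl[F_Y Y_t\pi_t+F_\pi(R(t)+\rho\sigma_X)-F\pi_t\bigr]$, so the product is a local martingale. The main obstacle is the $L^2$-control of this integrand; exploiting the identity $\tilde\xi_t Y_t=Y_0$, the inequality $(a+b+c)^2\le 3(a^2+b^2+c^2)$, and the boundedness of the deterministic function $R(t)+\rho\sigma_X$ on $[0,T]$, the squared integrand is bounded by a constant multiple of $(F_Y\pi_t)^2+(Y_t^{-1}F_\pi)^2+(Y_t^{-1}\pi_t F)^2$, which are exactly the three summands controlled by hypothesis \eqref{eq:c4}. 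This upgrades the local martingale to a true martingale and completes the verification; the formula for $\tilde\theta^*_t$ can then be read off by matching the predictable quadratic covariation of $\tilde W^*$ and $I$ against \eqref{eq:wealthPI}.
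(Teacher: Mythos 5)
Your proposal is correct and follows essentially the same route as the paper's proof: compute the dynamics of $Y_t$, apply It\^o to $F(Y_t,\pi_t,t)$ and use the PDE to identify it with a self-financing wealth process, then upgrade the local martingale $\tilde\xi_t F(Y_t,\pi_t,t)$ to a true martingale via the $L^2$ bound from \eqref{eq:c4}, and read off the strategy from the quadratic covariation with $I$. Your explicit bounding step (using $\tilde\xi_t Y_t=Y_0$ and the boundedness of $R(t)+\rho\sigma_X$) just spells out what the paper states more tersely.
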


\begin{proof}
Similarly to the proof of Theorem \ref{VT}, we need to show that the initial wealth satisfies the budget constraint  and that the
final wealth satisfies the first order condition 
and is attainable by  a self-financing strategy.

The budget constraint 
and the first order condition follow from \eqref{c2partial}. By It\^o formula we get
\begin{align*}
\frac{d  Y_t}{ Y_t}=&(r+ \pi^2_t) d t + \pi_t dI_t.
\end{align*}
Hence,
\begin{align}
& F( Y_t, \pi_t, t)= F( Y_0, \pi_0, 0)+ \int_0^t\widetilde{\mathcal{L}} F( Y_s, \pi_s, s) d s \nonumber \\
&\quad +\int_0^t\left((R_t+\rho \sigma_X)  F_\pi( Y_s, \pi_s, s)+ Y_s \pi_s  F_{ Y}( Y_s, \pi_s, s)\right) d I_s  \label{eqFpart}
\end{align}
where $\widetilde{\mathcal{L}}$ is the differential operator
\begin{align*}
\widetilde{\mathcal{L}}  F=&  F_t +\frac{1}{2}F_{ Y Y} Y^2 \pi^2  +  F_{\pi  Y} Y \pi (R_t+\rho \sigma_X)
\nonumber \\
&+ \frac{1}{2} F_{\pi \pi}(R_t+\rho \sigma_X)^2  -  F_\pi  \lambda(\pi-\bar X)+   F_{ Y}  Y (r+\pi^2).\nonumber
\end{align*}
To show that the optimal wealth can be obtained  by a self-financing strategy starting from $w$
it remains to prove that  the process $\tilde \xi_t  F( Y_t,\pi_t, t)$ is a true martingale.
By applying the product rule and using  \eqref{eq:pde_partial}, we get
\begin{align*}
&\tilde \xi_t  F( Y_t, \pi_t, t)=F( Y_0, \pi_0, 0)\\
&+ \int_0^t\tilde \xi_s\left((R_t+\rho \sigma_X)  F_\pi( Y_s, \pi_s, s) + Y_s \pi_s  F_{Y}(Y_s, \pi_s, s)-  F( Y_s, \pi_s, s) \pi_s\right) d I_s.
\end{align*}
By \eqref{eq:c4} and the fact that $R_t$ is the solution to the Riccati equation \eqref{eq:R} on $[0,T]$,  we get that  the integral with respect to $I$ is a true martingale. Then $\tilde W^*_t= F( Y_t, \pi_t, t)$ is the optimal wealth process and the optimal investment strategy in \eqref{eq:optimal_strategy_p} is obtained by equating the predictable covariation processes with respect to $I$ from \eqref{eq:wealthPI} and \eqref{eqFpart}.

\end{proof}


To obtain a closed form representation for the optimal wealth we guess that
$$
\E[\tilde{\xi}_T^{1-\frac{1}{\gamma}}|\mathcal F^S_t]=\tilde{\xi}_t^{1-\frac{1}{\gamma}} e^{\tilde A(t)+\tilde B(t)\pi_t+\frac{1}{2}\tilde C(t)\pi_t^2}
$$
where the functions $\tilde A(t), \tilde B(t)$ and $\tilde C(t)$ satisfy the system of Riccati Equations
\begin{equation}
\label{eqa_pi}
\left\{\begin{split}
\frac{d\tilde C}{dt} &= \tilde a + \tilde b(t) \tilde C(t) + \tilde c(t) \tilde C(t)^2,\\
\frac{d\tilde B}{dt} &=  -\tilde C(t)\lambda\bar{X} + \left(\frac{\tilde b(t)}{2}+\tilde c(t) \tilde C(t)\right)\tilde B(t), \\
\frac{d\tilde A}{dt} &= \frac{\gamma-1}{\gamma}r - \tilde B(t)\lambda\bar{X} + \frac{1}{2}\tilde c(t)\left( \tilde B(t)^2 + \tilde
C(t)\right)
\end{split}\right.
\end{equation}
with boundary conditions
\begin{equation}
\tilde A(T) = \tilde B(T) = \tilde C(T) = 0 ,
\label{BC_Riccati_pi}
\end{equation}
where
\begin{align*}
\tilde a =\frac{\gamma-1}{\gamma^2}, \quad \tilde b(t) = 2\left(\lambda +\frac{\gamma-1}{\gamma}\left(R(t)+\rho\sigma_X\right)\right), \quad \tilde c(t) =
-\left(R(t)+\rho\sigma_X\right)^2.
\end{align*}

The solutions to the non-homogeneous system of Riccati equations \eqref{eqa_pi}-\eqref{BC_Riccati_pi} are related to the solutions of the homogeneous system  \eqref{eqa}-\eqref{BC_Riccati} arising in the full information case.
This fact, shown in the next proposition, will be exploited to get  simpler expressions for many quantities of  interest.

\begin{proposition}\label{prop:Q}
Let the pairs of functions $B(t), C(t)$  and $\tilde B(t), \tilde C(t)$  satisfy the problems  \eqref{eqa}-\eqref{BC_Riccati} and  \eqref{eqa_pi}-\eqref{BC_Riccati_pi} on $[0,T]$, respectively and let
$$Q(t):=1-\gamma C(t) R(t).$$

Then, for all $t$ in $[0,T]$, $Q(t)$ is strictly positive and
\begin{eqnarray}
\tilde{C}(t) = Q(t)^{-1}C(t), \label{tildeC} \\
\tilde{B}(t) = Q(t)^{-1}B(t). \label{tildeB}
\end{eqnarray}
Moreover, the functions $C(t)$ and $\tilde C(t)$ are strictly positive and decreasing on [0,T] if $\gamma <1$ and are strictly negative and increasing if $\gamma > 1$.
\end{proposition}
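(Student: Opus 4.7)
The plan is to show that the candidate functions $\hat C(t):=C(t)/Q(t)$ and $\hat B(t):=B(t)/Q(t)$ satisfy exactly the same Riccati system \eqref{eqa_pi}--\eqref{BC_Riccati_pi} as $\tilde C$ and $\tilde B$. Wherever $Q\neq 0$, uniqueness of solutions to the Riccati ODE will then force $\tilde C=C/Q$ and $\tilde B=B/Q$; the positivity of $Q$ on $[0,T]$ will follow from a maximal-interval continuity argument that would force $\tilde C$ to blow up if $Q$ ever vanished.

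First I compute the ODE satisfied by $\hat C$. Using $Q'=-\gamma(C'R+CR')$ together with the identity $Q+\gamma CR=1$, a direct calculation gives $\hat C'Q^{2}=C'+\gamma C^{2}R'$. Plugging the ODE \eqref{eqa} for $C$ and the Riccati equation \eqref{eq:R} for $R$ into the right-hand side and collecting powers of $C$, I expect to recover $Q^{2}\bigl[\tilde a+\tilde b(t)\hat C+\tilde c(t)\hat C^{2}\bigr]$. The verification splits into three pieces: the constant term matches because $\tilde a=-a$; the linear-in-$C$ term collapses after noticing that $2a\gamma CR+\tfrac{2(\gamma-1)}{\gamma}(R+\rho\sigma_{X})C=\tfrac{2(\gamma-1)}{\gamma}\rho\sigma_{X}C=-bC$; and the quadratic-in-$C$ coefficient equals $-\gamma(R^{2}+2\lambda R+2R\rho\sigma_{X})-\rho^{2}\sigma_{X}^{2}$ on both sides after expanding $(R+\rho\sigma_{X})^{2}$ and using $c=\sigma_{X}^{2}(\rho^{2}+\gamma(1-\rho^{2}))$. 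Combined with the terminal check $\hat C(T)=C(T)/Q(T)=0=\tilde C(T)$, uniqueness of Riccati solutions yields $\tilde C=C/Q$ on every subinterval of $[0,T]$ where $Q>0$.

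Next I handle positivity. Set $t_{0}:=\inf\{t\in[0,T]:Q>0\text{ on }(t,T]\}$. Since $Q(T)=1$ and $Q$ is continuous, $t_{0}<T$. If $t_{0}>0$, continuity gives $Q(t_{0})=0$, which via $Q=1-\gamma CR$ forces $C(t_{0})\neq 0$ (and $R(t_{0})>0$ since $R$ solves \eqref{eq:R} with $R(0)>0$, so $R$ cannot hit zero in finite time from a strict sign on the trajectory). Then $\hat C(t)=C(t)/Q(t)\to\pm\infty$ as $t\downarrow t_{0}$, contradicting the identification $\tilde C=\hat C$ on $(t_{0},T]$ and the fact that $\tilde C$ is continuous on $[0,T]$ by assumption. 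Hence $t_{0}=0$ and $Q>0$ throughout $[0,T]$, proving \eqref{tildeC}. The identity \eqref{tildeB} for $\hat B:=B/Q$ is obtained by the same strategy: differentiating and using the ODE for $B$ together with the identity $\tilde C=C/Q$ just established, one verifies that $\hat B$ satisfies the linear ODE for $\tilde B$ in \eqref{eqa_pi} with $\hat B(T)=0=\tilde B(T)$, and uniqueness again gives the claim.

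For the final sign and monotonicity statements, the claim on $C$ is simply that of Kim and Omberg \cite[Eq.~(23)]{kim}. For $\tilde C$, the identity $\tilde C=C/Q$ with $Q>0$ transfers the sign directly. Monotonicity of $\tilde C$ follows from its own Riccati equation: at any point $t^{*}\in[0,T)$ where $\tilde C'(t^{*})=0$ one would have $\tilde a+\tilde b(t^{*})\tilde C(t^{*})+\tilde c(t^{*})\tilde C(t^{*})^{2}=0$; since $\tilde c\le 0$, $\mathrm{sgn}(\tilde a)=\mathrm{sgn}(\gamma-1)$, and $\tilde C(t^{*})$ has the sign dictated by $C(t^{*})$ (i.e.\ $\mathrm{sgn}(\tilde C)=-\mathrm{sgn}(\gamma-1)$), a case analysis of the two possibilities $\gamma\gtrless 1$ shows that this quadratic in $\tilde C(t^{*})$ cannot vanish at a value of the required sign, so $\tilde C'$ keeps the sign it has at $T$, namely $\tilde C'(T)=\tilde a$. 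The main obstacle is the algebraic verification that $\hat C$ solves the partial-information Riccati; everything else is either a one-line terminal check or a standard continuity/uniqueness argument.
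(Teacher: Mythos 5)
Your proof follows essentially the same route as the paper's: you verify that $C/Q$ and $B/Q$ solve the partial-information Riccati system (the paper delegates this computation to Brendle (2006); you carry it out), you rule out zeros of $Q$ by a continuity/contradiction argument at the boundary of the set where $Q>0$ (the paper takes the maximum $\bar t$ of $\{Q=0\}$ and passes to the limit in $Q\tilde C=C$; your blow-up of $C/Q$ at $t_0$ is the same idea, and your observation that $Q(t_0)=0$ forces $\gamma C(t_0)R(t_0)=1\neq 0$ is in fact cleaner than the paper's appeal to strict monotonicity of $C$), and you transfer the sign of $C$ to $\tilde C$ via $Q>0$. Two small points. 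First, your conclusion "$t_0=0$ hence $Q>0$ on $[0,T]$" still needs one word at the left endpoint: $t_0=0$ only gives $Q>0$ on $(0,T]$, and $Q(0)>0$ follows by running the same blow-up argument at $0$. Second, and more substantively, the monotonicity step is thinner than you claim: at a critical point you assert that a "case analysis" of $\tilde a+\tilde b(t^*)\tilde C(t^*)+\tilde c(t^*)\tilde C(t^*)^2=0$ yields a contradiction from the signs of $\tilde a$, $\tilde c$ and $\tilde C$, but the cross term does not have a definite sign: for $\gamma<1$ one has $\tilde b=2\bigl(\lambda+\tfrac{\gamma-1}{\gamma}(R+\rho\sigma_X)\bigr)$, which is positive when $\lambda$ dominates, so $\tilde b\tilde C>0$ could in principle offset $\tilde a<0$ and $\tilde c\tilde C^2\le 0$; the asserted contradiction therefore does not follow from sign considerations alone and an additional estimate is needed. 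In fairness, the paper's own proof of this last point is equally terse (it argues that the right-hand side "must be negative on $[0,T]$" from $f(0)=\tfrac{\gamma-1}{\gamma^2}<0$ without controlling the $\tilde b\tilde C$ term), so this is a shared weakness rather than a divergence from the paper's argument.
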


\begin{proof}
The fact that the function $C(t)$ is strictly positive and decreasing on [0,T] if $\gamma <1$ and it is negative and increasing for $\gamma > 1$ has been proven by Kim and Omberg (1996) \cite[Equation (23)]{kim}.

The function $Q(t)$ is continuous, hence the set  $\mathcal{T}:=\left\{t \in [0,T] \vert Q(t) = 0 \right\}$ is closed; we want to show that it is empty. By contradiction, let us assume that it is not empty and let $\bar t$ be its maximum. From the boundary condition \eqref{BC_Riccati} we see that $Q(T)=1$, hence $\bar t < T$. Relations \eqref{tildeC} and \eqref{tildeB} hold in the set $\mathcal{T}^C \cap [0,T]$, where $\mathcal{T}^C$ is the complement of $\mathcal{T}$. In fact they follow from the fact that $Q(t)^{-1}C(t)$ and $Q(t)^{-1}B(t)$ satisfy \eqref{eqa_pi}-\eqref{BC_Riccati_pi} when $C(t)$, $B(t)$ satisfy \eqref{eqa} -\eqref{BC_Riccati},
as it can be shown by following Brendle (2006) \cite[Equations (28)-(29)]{brendle2006}.
Therefore, for any $\epsilon > 0$ such that $\bar t + \epsilon <T$, $Q(\bar t+\epsilon)\tilde C(\bar t+\epsilon) = C(\bar t+\epsilon)$ and, by continuity of all the functions involved in the equality,  $Q(\bar t)\tilde C(\bar t) = C(\bar t)$.
Since $C(t)$ is a monotone function (either increasing or decreasing, depending on the parameter $\gamma$) and $C(T)=0$, then $C(\bar t) \not= 0$, hence $\bar t \notin \mathcal{T}$, that is a contradiction and $\mathcal{T}$ is the empty set.

Since $\mathcal{T}$ is empty, $Q(t)$ is continuous on $[0,T]$ and $C(T)=1$, it follows that $Q(t)$ is strictly positive on $[0,T]$, hence the functions $C(t)$ and $\tilde{C}(t)$ must have the same sign (positive for $\gamma <1$ and negative for $\gamma >1$).

Finally, we prove that for $\gamma < 1$, $\tilde{C}(t)$ is strictly decreasing on $[0,T]$.
Consider the equation
$$
\frac{d\tilde C(t)}{dt} = f(\tilde C(t)),
$$
where $f(\tilde C(t))$ is the right hand side of the first equation in \eqref{eqa_pi}-\eqref{BC_Riccati_pi}.
The boundary condition implies that $\tilde C(T) = 0$ and that $f(0) = \frac{\gamma-1}{\gamma^2} < 0$. Then the function $f(t)$ must be negative on $[0,T]$  for the boundary condition to be satisfied and hence $\tilde C(t)$ is strictly decreasing.
The same argument applies to the case  $\gamma >1$ where the derivative of $\tilde C(t)$ is positive and hence $\tilde{C}(t)$ is strictly increasing.
\end{proof}

We remark that from \eqref{tildeC}-\eqref{tildeB},  we can get an explicit expression for  $\tilde{B}(t)$ and $\tilde{C}(t)$ from those of $B(t)$ and $C(t)$. Then $\tilde{A}(t)$ can be obtained explicitly by integrating the right hand side of the third equation in system \eqref{eqa_pi}-\eqref{BC_Riccati_pi}.

We are now ready to determine the optimal wealth and the optimal investment strategy for the partial information problem.
\begin{theorem}\label{prop:OptSolPartInfo}
Let the functions $\tilde A(t)$, $\tilde B(t)$ and $\tilde C(t)$ satisfy \eqref{eqa_pi}--\eqref{BC_Riccati_pi} on $[0,T]$ and let
\begin{equation*}
\tilde \lambda_0 = \left[\frac{e^{\tilde A(0)+\tilde B(0) \pi_0+\frac{1}{2}\tilde C(0) \pi_0^2}}{w}\right]^\gamma.
\label{K_pi}
\end{equation*}
Assume that
\begin{equation}
\E\left[\int_0^T \left(\tilde \xi_t^{1-\frac{1}{\gamma}}e^{\tilde A(t)+\tilde B(t)\pi_t+\frac{1}{2}\tilde C(t)\pi_t^2}\right)^2 (1+\pi_t^2)     \right]<\infty .\label{eq:integral4}
\end{equation}
Then
$\tilde \lambda_0$ is the Lagrangian multiplier from the budget constraint \eqref{static_BC_pi} and the optimal wealth and the optimal investment strategy are given by
\begin{align}
\tilde W_t^* = (\tilde \lambda_0 \tilde \xi_t)^{-\frac{1}{\gamma}} e^{\tilde A(t)+\tilde B(t)\pi_t+\frac{1}{2}\tilde C(t)\pi_t^2},
\label{OptW_t_pi}\\
\tilde\theta_t^* = \frac{1}{\gamma}\frac{\pi_t}{\sigma} +\frac{(R(t)+\rho\sigma_X)}{\sigma}(\tilde B(t)+\tilde C(t)\pi_t),
\nonumber
\end{align}
for every $t \in [0,T]$.
\end{theorem}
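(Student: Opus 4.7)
The plan is to apply the Verification Theorem under partial information (Theorem \ref{VTpartial}) to the candidate function
\begin{equation*}
F(Y,\pi,t) := Y^{1/\gamma}\, e^{\tilde A(t)+\tilde B(t)\pi+\frac{1}{2}\tilde C(t)\pi^2},
\end{equation*}
defined on $\R^+\times\R\times[0,T]$. This mirrors the structure of the proof of Theorem \ref{prop:OptSolFullInfo}, but with the filtered variable $\pi_t$ replacing $X_t$, the unique state price density $\tilde\xi_t$ replacing the minimax $\xi^*_t$, and the non-homogeneous Riccati system \eqref{eqa_pi}-\eqref{BC_Riccati_pi} in place of \eqref{eqa}-\eqref{BC_Riccati}.

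First, I would check that $F$ satisfies the PDE \eqref{eq:pde_partial}. Computing $F_t$, $F_Y$, $F_{YY}$, $F_\pi$, $F_{\pi\pi}$ and $F_{Y\pi}$ and substituting into \eqref{eq:pde_partial}, one can factor out $F$ and collect the resulting polynomial in $\pi$ by powers. The coefficient of $\pi^2$, of $\pi$, and the constant term each give, after some algebra using the definitions of $\tilde a,\tilde b(t),\tilde c(t)$, exactly the ODEs in \eqref{eqa_pi} for $\tilde C$, $\tilde B$, $\tilde A$ respectively. The boundary conditions \eqref{c2partial} follow from \eqref{BC_Riccati_pi}, which gives $F(Y,\pi,T)=Y^{1/\gamma}$, and from the explicit choice $Y_0=\tilde\lambda_0^{-1}$ that forces $F(Y_0,\pi_0,0)=w$. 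The same calculation shows that $\tilde\lambda_0$ is the Lagrange multiplier for \eqref{static_pr_pi}-\eqref{static_BC_pi}.

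Next, I would verify the integrability condition \eqref{eq:c4}. Since $Y_t=(\tilde\lambda_0\tilde\xi_t)^{-1}$, we have
\begin{equation*}
Y_t^{-1}F(Y_t,\pi_t,t) = \tilde\lambda_0^{1-1/\gamma}\,\tilde\xi_t^{1-1/\gamma}\,e^{\tilde A(t)+\tilde B(t)\pi_t+\frac{1}{2}\tilde C(t)\pi_t^2}/\tilde\lambda_0^{-1/\gamma},
\end{equation*}
and $F_Y$, $F_\pi$ are both of the form $F(Y_t,\pi_t,t)$ times a polynomial in $\pi_t$ with continuous, hence bounded, coefficients on $[0,T]$ (using the continuity of $\tilde B, \tilde C$ and $R$). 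The three terms in \eqref{eq:c4} are therefore all dominated by a constant multiple of
\begin{equation*}
\left(\tilde\xi_t^{1-1/\gamma}\,e^{\tilde A(t)+\tilde B(t)\pi_t+\frac{1}{2}\tilde C(t)\pi_t^2}\right)^{2}(1+\pi_t^2),
\end{equation*}
whose time-integrated expectation is finite by assumption \eqref{eq:integral4}.

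Finally, the optimal wealth representation \eqref{OptW_t_pi} follows directly from $\tilde W_t^*=F(Y_t,\pi_t,t)$, and the optimal strategy is read off from \eqref{eq:optimal_strategy_p}: using $F_Y Y=\frac{1}{\gamma}F$ and $F_\pi=F(\tilde B(t)+\tilde C(t)\pi)$ gives
\begin{equation*}
\tilde\theta_t^* \;=\; \frac{1}{\gamma}\frac{\pi_t}{\sigma} + \frac{R(t)+\rho\sigma_X}{\sigma}\bigl(\tilde B(t)+\tilde C(t)\pi_t\bigr),
\end{equation*}
as claimed. The main obstacle I anticipate is the bookkeeping in step one: verifying that the PDE \eqref{eq:pde_partial} coefficient-by-coefficient really does reduce to the non-homogeneous Riccati system, since the $\pi$-dependent coefficients $\tilde b(t),\tilde c(t)$ through $R(t)$ complicate the matching compared to the full-information case. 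The remaining verifications are routine applications of the techniques already used in the proof of Theorem \ref{prop:OptSolFullInfo}.
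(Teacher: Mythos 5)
Your proposal is correct and follows exactly the route the paper intends: the paper omits this proof, stating only that it ``follows from the same argument of the analogous result under full information, Theorem \ref{prop:OptSolFullInfo},'' and your argument is precisely that adaptation --- applying the Verification Theorem \ref{VTpartial} to the candidate $F(Y,\pi,t)=Y^{1/\gamma}e^{\tilde A(t)+\tilde B(t)\pi+\frac{1}{2}\tilde C(t)\pi^2}$, matching coefficients in $\pi$ to recover the non-homogeneous Riccati system, and dominating the terms in \eqref{eq:c4} by the quantity in \eqref{eq:integral4}. (Only a cosmetic slip: $Y_t^{-1}F(Y_t,\pi_t,t)=(\tilde\lambda_0\tilde\xi_t)^{1-1/\gamma}e^{\tilde A(t)+\tilde B(t)\pi_t+\frac{1}{2}\tilde C(t)\pi_t^2}$, so the stray division by $\tilde\lambda_0^{-1/\gamma}$ in your display is off by a constant factor, which does not affect the integrability conclusion.)
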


\begin{proof}
The proof follows from the same argument of the analogous result under full information, Theorem \ref{prop:OptSolFullInfo}, and hence is omitted.
\end{proof}

In the next proposition we provide sufficient conditions to apply Theorem \ref{prop:OptSolPartInfo} that are easier to check for a given set of parameters.

\begin{proposition}\label{prop:boundPI}
Let the functions $\tilde A(t)$, $\tilde B(t)$ and $\tilde C(t)$ satisfy \eqref{eqa_pi}--\eqref{BC_Riccati_pi} on $[0,T]$ and
assume that at least one of the following two holds
\begin{enumerate}
\item[(i)] $\gamma>1$
\item[(ii)] The functions $A(t)$, $B(t)$ and $C(t)$ satisfy \eqref{eqa}- \eqref{BC_Riccati} on $[0,T]$ and
\begin{equation}
1-4\frac{C(0)}{Q(0)}\max \left(R_0, R_0e^{-2\lambda T}+\frac{\sigma_X^2}{2\lambda}(1-e^{-2\lambda T})\right)>0.
\label{eq:condPI}
\end{equation}
\end{enumerate}

Then all assumptions of  Theorem \ref{prop:OptSolPartInfo} are satisfied.
\end{proposition}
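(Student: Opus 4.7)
The plan is to follow the structure of Proposition \ref{prop:bound}, with the modifications required by the partial information setting. The only non-trivial condition to verify in Theorem \ref{prop:OptSolPartInfo} is the integrability \eqref{eq:integral4}; all other conditions follow from the construction of $\tilde A, \tilde B, \tilde C$ and from Proposition \ref{prop:Q}, which guarantees that $\tilde A, \tilde B, \tilde C$ are well defined on $[0,T]$ whenever $A, B, C$ are.

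First I would apply Cauchy--Schwartz and Fubini to bound
\begin{align*}
&\E\!\left[\int_0^T e^{2\tilde A(t)} \bigl(\tilde\xi_t^{1-1/\gamma}\bigr)^2 e^{2\tilde B(t)\pi_t+\tilde C(t)\pi_t^2}(1+\pi_t^2)\,dt\right] \\
&\qquad\leq \kappa \int_0^T e^{2\tilde A(t)}\,\E\!\left[\tilde\xi_t^{8(1-1/\gamma)}\right]^{1/4} \E\!\left[e^{4\tilde B(t)\pi_t+2\tilde C(t)\pi_t^2}\right]^{1/2} \E\!\left[(1+\pi_t^2)^4\right]^{1/4} dt.
\end{align*}
The process $\pi$ is the solution to a linear SDE driven by the innovation Brownian motion $I$ with deterministic coefficients, so $\pi_t$ is Gaussian for every $t$; in particular all polynomial moments of $\pi_t$ are finite, giving finiteness of $\E[(1+\pi_t^2)^4]$. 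The process $\tilde \xi_t = \exp\bigl(-rt - \int_0^t \pi_s\,dI_s - \tfrac12\int_0^t \pi_s^2\,ds\bigr)$ is the exponential of a Gaussian random variable plus a non-negative quadratic functional of $\pi$, hence the moment $\E[\tilde\xi_t^{8(1-1/\gamma)}]$ is finite for every $t\in[0,T]$, exactly as in Proposition \ref{prop:bound} (compare Revuz and Yor (2013) \cite[Chapter 8]{revuz}).

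The critical term is the exponential-quadratic moment $\E[e^{4\tilde B(t)\pi_t+2\tilde C(t)\pi_t^2}]$. By the Gaussian formula recalled in Footnote \ref{footnoteE} applied to $\pi_t$ with unconditional variance $\bar v_t := \mathrm{Var}(\pi_t)$, this expectation is finite if and only if $1-4\tilde C(t)\bar v_t>0$. I would bound $\bar v_t$ by the law of total variance, $\bar v_t = v_t - R(t) \leq v_t$, with $v_t = R_0 e^{-2\lambda t}+\frac{\sigma_X^2}{2\lambda}(1-e^{-2\lambda t})$ the unconditional variance of $X_t$; monotonicity of $v_t$ in $t$ gives $\bar v_t \leq \max(R_0, R_0 e^{-2\lambda T}+\frac{\sigma_X^2}{2\lambda}(1-e^{-2\lambda T}))$.

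It then remains to control $\tilde C(t)$ using Proposition \ref{prop:Q}. Under hypothesis (i), $\gamma>1$ forces $C(t)\le 0$ and $Q(t)>0$, so $\tilde C(t)=C(t)/Q(t)\le 0$ and $1-4\tilde C(t)\bar v_t>0$ automatically. Under hypothesis (ii), Proposition \ref{prop:Q} gives $\tilde C(t)$ strictly positive and decreasing on $[0,T]$, so $\tilde C(t)\le \tilde C(0)=C(0)/Q(0)$, and combining with the variance bound above the assumption \eqref{eq:condPI} yields $1-4\tilde C(t)\bar v_t>0$ uniformly in $t\in[0,T]$. Since $\tilde A, \tilde B, \tilde C$ are continuous on the compact interval $[0,T]$, the integrand is bounded and \eqref{eq:integral4} follows. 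The main obstacle I anticipate is justifying the Gaussianity of $\pi_t$ and the clean bound on $\bar v_t$; once that is in place, the combinatorics of the Cauchy--Schwartz step carry over essentially verbatim from Proposition \ref{prop:bound}.
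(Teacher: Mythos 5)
Your proposal is correct and follows essentially the same route as the paper's proof: the same Cauchy--Schwartz/Fubini factorization into three expectations, the same treatment of the Gaussian and Ornstein--Uhlenbeck moments, and the same use of Proposition \ref{prop:Q} together with the bound $\mathrm{Var}(\pi_t)=v_t-R(t)\le v_t\le\max(R_0,v_T)$ to reduce the critical exponential-quadratic moment to condition \eqref{eq:condPI}. No substantive differences to report.
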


\begin{proof}
We only need to show that the integrability condition  \eqref{eq:integral4} in Theorem \ref{prop:OptSolPartInfo} is satisfied.

Using Fubini and the Cauchy Schwartz inequality we get
\begin{align*}
&\E\left[\int_0^T e^{2\tilde A(t)} \left(\tilde{\xi}_t^{1-\frac{1}{\gamma}}\right)^2 e^{2\tilde B(t)\pi_t+\tilde C(t)\pi_t^2} (1+\pi_t^2) dt  \right]\\
&\quad \leq \kappa_1 \int_0^T e^{2\tilde A(t)} \E\left[\tilde{\xi}_t^{8(1-\frac{1}{\gamma})}\right]^{\frac{1}{4}} \E\left[e^{4 \tilde B(t)\pi_t+2\tilde C(t)\pi_t^2}\right]^{\frac{1}{2}} \E[(1+\pi_t^2)^4]^{\frac{1}{4}} dt.
\end{align*}
Since $\pi_t$ is Gaussian,
$
\E[(1+\pi_t^2)^4]<\infty
$.
The expectation
$\E\left[\tilde{\xi}_t^{8(1-\frac{1}{\gamma})}\right]$
is finite since $\pi$ is Ornstein-Uhlenbeck (see, Revuz and Yor (2013)\cite[Chaper 8, Ex. 3.14]{revuz}).
Finally, $\E\left[e^{4\tilde B(t)\pi_t+2\tilde C(t)\pi_t^2}\right]$ is finite for all $t \in [0,T]$ if and only if
$1-4\tilde C(t)\tilde{v}_t>0$ where
$\tilde v_t = v_t - R_t$ is the variance of $\pi_t$ (and $v_t$ is the variance of $X_t$).

If  $\gamma >1$,  from Proposition \ref{prop:Q},   $\tilde{C}(t)<0$. Hence $1-4\tilde C(t)\tilde{v}_t >0$ and \eqref{eq:integral4} is satisfied.

 If $\gamma < 1$, still from Proposition \ref{prop:Q} $\tilde{C}(t)$ is strictly positive and  decreasing in $[0,T]$.
Therefore
\begin{eqnarray}
1-4\tilde C(t)\tilde{v}_t &>& 1-4\tilde C(0)v_t \geq 1-4\frac{C(0)}{Q(0)}\max \left(R_0, R_0e^{-2\lambda T}+\frac{\sigma_X^2}{2\lambda}(1-e^{-2\lambda T})\right),\nonumber
\end{eqnarray}
where the first inequality follows from the monotonicity of $\tilde{C}$ and from the fact that $\tilde{v}_t < v_t$. The second inequality follows from \eqref{tildeC} and from the fact that $v_t$ is always lower than its maximum value on $[0,T]$ that is equal to $R_0$ or to $v_T$  depending on $R(t)$ being decreasing  or increasing. Then the result follows immediately from \eqref{eq:condPI}.
\end{proof}

Now we can compute the conditional moment generating function of the optimal wealth under the partial information,
$$\tilde{\phi}_s(t,z) :=  \E\left[(\tilde{W}^*_t)^z|\mathcal{F}^S_s\right]. $$

\begin{proposition}\label{prop:MGF_PI}

Let $\tilde{\phi}_s(t,z)<\infty$ for $0\le s \le t \le T$
and  $z>0$.

Then
\begin{equation*}
\tilde{\phi}_s(t,z) = (\tilde \lambda_0 \tilde \xi_s)^{-\frac{z}{\gamma}}e^{\tilde{D}(s;t,z)+\tilde{E}(s;t,z)\pi_s+\frac{1}{2}\tilde{H}(s;t,z)\pi^2_s}
\label{MGF_lnW3_PI}
\end{equation*}
where   $\tilde{D}:[0,t]\to \R$, $\tilde{E}:[0,t]\to \R$ and $\tilde{H}:[0,t]\to \R$ satisfy the  system of differential equations\footnote{
 Note that the functions $\tilde D(s), \tilde E(s)$ and $\tilde H(s)$  depend on $t$ and $z$. We do not report such dependence into the formulas for a simpler notation.}
\begin{equation*}\label{system:DEF_PI}
\left\{
\begin{split}
\frac{d\tilde{H}}{ds} &= \tilde{d}(s)+2\tilde{e}(s)\tilde{F}(s)+\tilde{f}(s)\tilde H(s)^2\\
\frac{d\tilde{E}}{ds} &= \left(\tilde{e}(s)+\tilde{f}(s)\tilde{H}(s)\right)\tilde{E}(s)-\lambda \bar{X}\tilde{H}(s)\\
\frac{d\tilde{D}}{ds} &= -\frac{zr}{\gamma} -\lambda\bar{X}\tilde{E}(s) + \frac{1}{2}\tilde{f}(s)(\tilde{H}(s)+\tilde{E}(s)^2)
\end{split}\right.
\end{equation*}
with boundary conditions
\begin{equation}
\tilde{D}(t) = z\tilde{A}(t), \qquad \tilde{E}(t) = z \tilde{B}(t), \qquad \tilde{H}(t) = z\tilde{C}(t),\label{BC_mgf_PI}
\end{equation}
and
\begin{eqnarray*}
\tilde{d}(s) &=& -\left(\frac{z^2}{\gamma^2}+\frac{z}{\gamma}\right),\nonumber\\
\tilde{e}(s) &=& \lambda-\frac{z}{\gamma}(R(s) + \rho\sigma_X), \nonumber\\
\tilde{f}(s) &=& -(R(s)+\rho\sigma_X)^2,\nonumber
\end{eqnarray*}
for every $s \leq t$, and where the functions $\tilde{A}$, $\tilde{B}$ and $\tilde{C}$ satisfy \eqref{eqa_pi}-\eqref{BC_Riccati_pi}.
\end{proposition}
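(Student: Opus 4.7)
The plan is to mirror the argument used for Proposition \ref{prop:MGF_FI}. Starting from the closed form \eqref{OptW_t_pi} for $\tilde W_t^*$, I write
\begin{equation*}
\tilde{\phi}_s(t,z) = \tilde\lambda_0^{-z/\gamma}\,\E\!\left[\tilde\xi_t^{-z/\gamma}\,e^{z\tilde A(t)+z\tilde B(t)\pi_t+\frac{z}{2}\tilde C(t)\pi_t^2}\,\Big|\,\mathcal F_s^S\right].
\end{equation*}
Since in the partial information setting $(\ln\tilde\xi,\pi,\pi^2)$ is jointly affine with state $\pi$ (both $\tilde\xi$ and $\pi$ are driven by the single Brownian motion $I$ with coefficients that are affine in $\pi$), I guess the exponential-affine form stated in the proposition and define
\begin{equation*}
\tilde G(s,\tilde\xi,\pi;t,z)\;:=\;\tilde\xi^{-z/\gamma}\,\exp\!\left(\tilde D(s;t,z)+\tilde E(s;t,z)\pi+\tfrac12\tilde H(s;t,z)\pi^2\right).
\end{equation*}
By construction, the process $M_s:=\tilde G(s,\tilde\xi_s,\pi_s;t,z)$ coincides, up to the deterministic factor $\tilde\lambda_0^{-z/\gamma}$, with the candidate expression for $\tilde{\phi}_s(t,z)$, and the finiteness assumption $\tilde{\phi}_s(t,z)<\infty$ makes the conditional expectation a true $\mathbb{F}^S$-martingale.

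Next, I apply It\^o's formula to $M$ using
\begin{equation*}
d\tilde\xi_s=-r\tilde\xi_s\,ds-\pi_s\tilde\xi_s\,dI_s,\qquad d\pi_s=-\lambda(\pi_s-\bar X)\,ds+(R(s)+\rho\sigma_X)\,dI_s,
\end{equation*}
together with the covariation $d\langle\tilde\xi,\pi\rangle_s=-\tilde\xi_s\pi_s(R(s)+\rho\sigma_X)\,ds$. Setting the drift of $M$ to zero yields the linear PDE
\begin{align*}
0 \;=\;& \tilde G_s - r\tilde\xi\,\tilde G_{\tilde\xi} - \lambda(\pi-\bar X)\tilde G_\pi + \tfrac12\tilde\xi^2\pi^2\,\tilde G_{\tilde\xi\tilde\xi} \\
& + \tfrac12(R(s)+\rho\sigma_X)^2\tilde G_{\pi\pi} - \tilde\xi\pi(R(s)+\rho\sigma_X)\,\tilde G_{\tilde\xi\pi},
\end{align*}
on $[0,t)\times\R^+\times\R$. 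Substituting the ansatz and using $\tilde G_{\tilde\xi}=-\tfrac{z}{\gamma}\tilde\xi^{-1}\tilde G$, $\tilde G_{\tilde\xi\tilde\xi}=\tfrac{z}{\gamma}(\tfrac{z}{\gamma}+1)\tilde\xi^{-2}\tilde G$, $\tilde G_\pi=(\tilde E+\tilde H\pi)\tilde G$, $\tilde G_{\pi\pi}=(\tilde H+(\tilde E+\tilde H\pi)^2)\tilde G$, and $\tilde G_{\tilde\xi\pi}=-\tfrac{z}{\gamma}\tilde\xi^{-1}(\tilde E+\tilde H\pi)\tilde G$, I divide by $\tilde G$. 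The $\tilde\xi$ dependence then disappears, and collecting the coefficients of $\pi^2$, $\pi$, and the constant separately produces exactly the three Riccati-type ODEs stated, with the coefficients $\tilde d(s),\tilde e(s),\tilde f(s)$ as given. Uniqueness of affine solutions along the lines of \cite[Lemma 10.1]{filipovic2009} identifies $\tilde D,\tilde E,\tilde H$, and the boundary conditions \eqref{BC_mgf_PI} follow from the requirement $\tilde{\phi}_t(t,z)=(\tilde W_t^*)^z$ combined with \eqref{OptW_t_pi}.

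The main obstacle I expect is purely computational: carefully propagating the contribution of the factor $\tilde\xi^{-z/\gamma}$ through the PDE, in particular the quadratic variation term $\tfrac12(\tfrac{z}{\gamma})(\tfrac{z}{\gamma}+1)\pi^2$ and the cross term $\tfrac{z}{\gamma}\pi(R(s)+\rho\sigma_X)(\tilde E+\tilde H\pi)$, and then matching powers of $\pi$ correctly so that the coefficient of $\pi^2$ reproduces $\tilde d(s)+2\tilde e(s)\tilde H(s)+\tilde f(s)\tilde H(s)^2$, the coefficient of $\pi$ reproduces $(\tilde e(s)+\tilde f(s)\tilde H(s))\tilde E(s)-\lambda\bar X\tilde H(s)$, and the constant term reproduces $-\tfrac{zr}{\gamma}-\lambda\bar X\tilde E(s)+\tfrac12\tilde f(s)(\tilde H(s)+\tilde E(s)^2)$. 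A secondary technical point is to justify rigorously that $M$ is a true martingale rather than only a local martingale, but this follows from the assumed finiteness of $\tilde\phi_s(t,z)$ together with the Gaussian tails of $\pi_t$ and standard dominated convergence, since then $\tilde G(s,\tilde\xi_s,\pi_s;t,z)=\E[M_t\mid\mathcal F_s^S]$ by the very definition of the conditional moment generating function.
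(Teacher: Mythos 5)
Your proof is correct and follows essentially the same route as the paper's: express $\tilde\phi_s(t,z)$ via the closed form of $\tilde W_t^*$, posit the exponential-affine ansatz $\tilde G$, use the martingale property of the conditional expectation to derive the zero-drift PDE in $(\tilde\xi,\pi)$ under the innovation filtration, and match coefficients of $\pi^2$, $\pi$, and the constant to obtain the Riccati system and boundary conditions. The only difference is that you carry out the coefficient-matching and the covariation bookkeeping more explicitly than the paper, which simply records the PDE and declares the proof complete.
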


\begin{proof}
The proof replicates the  steps of the proof of  Proposition \ref{prop:MGF_FI}. Using that the process $(\ln \tilde \xi, \pi, \pi^2)$ is affine
we have
\begin{eqnarray*}
\tilde\phi_s(t,z) &=& \tilde\lambda_0^{-z/\gamma}  \E\left[\tilde{\xi_t}^{-z/\gamma}e^{z\tilde A(t)+z\tilde B(t)\pi_t+\frac{1}{2}z\tilde C(t)\pi_t^2}|\mathcal{F}_s^S\right]\nonumber\\
&=&\tilde\lambda_0^{-z/\gamma} \tilde G(s, \tilde\xi_s, \pi_s;t,z)
\label{MGF_lnW2_PI}
\end{eqnarray*}
where
\begin{equation*}
\tilde G(s,\tilde\xi_s,\pi_s;t,z) = \tilde{\xi_s}^{-z/\gamma}e^{\tilde D(s;t,z)+\tilde E(s;t,z)\pi_s+\frac{1}{2}\tilde H(s;t,z)\pi^2_s}.
\label{MGF1_PI}
\end{equation*}
The boundary conditions \eqref{BC_mgf_PI} follow from $\tilde{\phi}_t(t,z)  = (\tilde{W}_t^*)^z$.
The function $\tilde G(s, \tilde\xi, \pi; t, z)$ is differentiable with respect to $s$, and twice differentiable with respect to $\tilde \xi$ and $\pi$.
Moreover, by definition, the process $(\tilde G(s, \tilde \xi_s, \pi_s;t,z))_{\{s\in[0,t]\}}$
is a martingale with respect to filtration $\bF^S$ . Hence, by applying It\^{o}'s formula we get that the function $\tilde G$ satisfies the equation
\begin{eqnarray}
0 & = &\frac{\partial \tilde{G}}{\partial s}-r\tilde\xi_s\frac{\partial \tilde{G}}{\partial \tilde\xi}-\lambda(\pi_s-\bar X)\frac{\partial
\tilde{G}}{\partial \pi}\nonumber\\
&+&\frac{1}{2}\left(\frac{\partial^2 \tilde{G}}{\partial \pi^2} (R_s+\rho\sigma_X)^2+ {\tilde\xi_s}^2\pi_s^2 \frac{\partial^2 \tilde{G}}{\partial \tilde\xi^2}
-2\tilde\xi_s\pi_s
\frac{\partial^2 \tilde{G}}{\partial \tilde\xi \partial \pi}(R_s+\rho\sigma_X) \right). \nonumber
\end{eqnarray}
This completes the proof.
\end{proof}

Note that, also under partial information, formulas simplify when
 $t= T$ and $z = 1-\gamma$, in fact:
\begin{eqnarray*}
\tilde D(s;T,1-\gamma) & = & \tilde A(s),\\
\tilde E(s;T,1-\gamma) & = & \tilde B(s),\\
\tilde H(s;T,1-\gamma) & = & \tilde C(s).
\end{eqnarray*}
This allows to compute the optimal expected utility in closed form, since
\begin{eqnarray}
\E\left[u(\tilde W_T^*)|\mathcal{F}^S_s\right] &=& \frac{1}{1-\gamma} \tilde \phi_s(T,1-\gamma)\nonumber\\
&=& \frac{1}{1-\gamma} {(\tilde{\lambda}_0\tilde{\xi}_s)}^{1-1/\gamma}
e^{\tilde A(s)+ Q(s)^{-1}B(s)\pi_s +\frac{1}{2}Q(s)^{-1}C(s)\pi_s^2}
\nonumber\\
&=& \frac{1}{1-\gamma}  (\tilde{W}^*_s)^{1-{\gamma}} e^{\gamma(\tilde A(s)+ Q(s)^{-1}B(s)\pi_s+\frac{1}{2}Q(s)^{-1}C(s)\pi_s^2)} \label{phisTtilde}
\end{eqnarray}
where we have used the explicit expression of $\tilde C$ and $\tilde B$ in terms of $C$ and $B$, given in  \eqref{tildeC}-\eqref{tildeB} and where the last equality is obtained from  \eqref{OptW_t_pi}.

\section{The value of information}\label{sec:value}
We are now ready to define the {\em value of information}, that is  to assign a monetary value to the possibility of improving the knowledge of the market price of risk. We start by computing the reservation price, that is  the maximal amount of money that
a partially informed investor would be willing to pay to get extra information. We will  focus on two kinds of  information,  which we call {\em initial  } and  {\em dynamic}.  While the initial information  gives the exact knowledge of $X_0$ that is the value of the market price of risk at time $0$,  the dynamic information  provides the run-time values  $X_t$ at all times $t\in [0,T]$.

A partially informed investor, endowed with a  starting wealth $w$, with  a prior $X_0 \sim N(\pi_0,R_0)$  obtains, at time $T$, the final (optimal) wealth $\tilde W_T^*(w)$. Let us now assume that the value assumed by $X_0$ is revealed to the investor at time $0$. Then she will be able to implement the optimal strategy, still under partial information because the following path of $X$ will remain unknown to her, but this time starting from the exact value $X_0$. Let us denote by $\tilde
W_T^I(w)$ the optimal wealth obtained at time $T$, where the index $I$ highlights the {\em Initial} information case.
When dynamic information is provided to the investor, she will reach the wealth produced at time $T$ by the optimal strategy under full information, that is $W_T^*(w)$.
 Since the sets of feasible strategies for the three scenarios are strictly increasing, the following inequalities hold
\begin{align}
\E \left[u ( \tilde W_T^*(w))\right] 
& \le  \E\left[ u ( \tilde W_T^I(w))  \right]  \label{EU1} \\
& \le  \E\left[ u (W_T^*(w) ) \right].    \label{EU2}
\end{align}
The maximum amount that the investor is willing to pay to receive the initial information is the quantity $\Delta w< w$ that satisfies
\begin{equation}
\E  [u ( \tilde W_T^*(w) )]= \E\left[ u ( \tilde W_T^I(w-\Delta w)) \right] \label{EU3}.
\end{equation}
Notice that $\Delta w >0 $ because of  \eqref{EU1}  and the fact that the expected utility is increasing with respect to  the initial wealth.
From \eqref{phisTtilde} computed for $s = 0$ we get
\begin{equation}
\E \left[u ( \tilde W_T^*(w) )\right]
= \frac{w^{1-\gamma}}{1-\gamma}e^{\gamma\left(\tilde A_0(R_0) + Q_0^{-1} B_0\pi_0 +\frac{1}{2} Q_0^{-1} C_0\pi_0^2 \right)}
\label{U_PI}
\end{equation}
where we use the notation $Q_0:=Q(0)$, $B_0:=B(0)$, $C_0:=C(0)$ and $\tilde A_0(R_0):=\tilde A(0)$ to highlight the  dependence of $\tilde A(0)$  on  $R_0$.
With an analogous computation, setting $\pi_0=X_0$ and $R_0=0$, we get
 \begin{align*}
\E\left[ u ( \tilde W_T^I(w-\Delta w)) \vert X_0\right]=
 \frac{(w-\Delta w)^{1-\gamma}}{1-\gamma} e^{\gamma\left(\tilde A_0(0) + B_0X_0+\frac{1}{2}C_0X_0^2\right)}.
\end{align*}
Hence, the right hand side of \eqref{EU3} is
 \begin{equation}
 \E\left[  \E\left[ u ( \tilde W_T^I(w-\Delta w)) \vert X_0\right]\right]
 = \frac{(w-\Delta w)^{1-\gamma}}{(1-\gamma)\sqrt{Q_0}}  e^{\gamma\tilde A_0(0) + \frac{\gamma}{2Q_0}\left(\gamma B_0^2R_0 +
 2B_0\pi_0 +  C_0\pi_0^2\right)}\label{eq:U_HI}
 \end{equation}
which holds when $Q_0 = 1-\gamma C_0 R_0 > 0$. 
Solving equation \eqref{EU3} using the explicit expressions in \eqref{U_PI} and \eqref{eq:U_HI} we get $\Delta w$ and we can define the {\em Value of Initial Information} ${\mathcal V}^{I}$ as the ratio $\Delta w/w$, that is
\begin{equation}
{\mathcal V}^{I}= 1 - \left(\sqrt{Q_0}
\, e^{\gamma(\tilde A_0(R_0)-\tilde A_0(0))-\frac{\gamma^2B_0^2R_0}{2Q_0}}\right)^\frac{1}{1-\gamma}
\label{InitialInfo}.\end{equation}
We remark that ${\mathcal V}^{I}$ does not depend on the expected value of the market price of risk $\pi_0$ but only on the variance of
the initial estimate $R_0$.

Let us now compute the reservation price for the dynamic information. Let the quantity  $\Delta w$ be the solution to the equation
\begin{equation}
\E \left[u ( \tilde W_T^*(w) )\right]= \E \left[ u (W_T^*(w-\Delta w)) \right]. \label{EU4}
\end{equation}
Inequality \eqref{EU2} implies that $0<\Delta w<w$. To compute the right hand side of \eqref{EU4} we use equation\eqref{eq:gaussian}. 
The left hand side of \eqref{EU4} is given in \eqref{phisTtilde} for $s=0$.
Again we shorten the notation by using $A_0=A(0), B_0=B(0), C_0=C(0), Q_0=Q(0)$ and $\tilde{A}_0(R_0) = \tilde{A}(0)$.
Hence, we can extract the reservation price $\Delta w$ from \eqref{EU4} and define the {\em Value of Dynamic Information} ${\mathcal V}^{D}$ as the ratio $\Delta w/w$, that
is
\begin{equation}
{\mathcal V}^{D} =  1 - \left(\sqrt{Q_0} \; e^{\gamma (\tilde A_0(R_0) - A_0) - \frac{\gamma^2B_0^2R_0}{2Q_0}}\right)^{\frac{1}{1-\gamma}} \ .
\label{valueInfo}
\end{equation}
From inequalities \eqref{EU1}- \eqref{EU2} we get
\begin{equation}
 0 < {\mathcal V}^{I}  \le {\mathcal V}^{D}  < 1. \label{VIVD}
 \end{equation}
We remark that the expression for ${\mathcal V}^{D}$ can be obtained from that of ${\mathcal V}^{I} $  \eqref{InitialInfo}
by replacing  $\tilde A_0(0)$ with $A_0$. We also note that ${\mathcal V}^{D}$does not depend on the expected value of the market price of risk.

\section{Applications}\label{sec:numerics}
In this section we discuss some applications of our results with the parameters of Table \ref{Table1} obtained from the estimates provided by  Xia (2001)
\cite[Table I]{xia2001} on the U.S. stock market, from 1950 to 1997.

\begin{table}[htpb]
\begin{tabular}{c c c c c c c c c c c }
\hline\\
  $r$&$\sigma$&$\lambda$&$\sigma_X$&$\bar X$&$\pi_0$&$R_0$&$S_0$&$W_0$&$T$& $\gamma$\\
 \hline\\
  3.4\% & 14.4\% & 0.19 & 18.75\% & 0.3958 &  0.3958&0.09&1  &1&5 &5\\
\\
\hline
\end{tabular}
\caption{Parameter set adopted in this section (expressed on a yearly basis and derived from \cite[Table I]{xia2001})}
\label{Table1}
\end{table}

Figure \ref{nu2} shows  the optimal penalization factor  $\nu^*_0$ for the incomplete market under full information derived in  \eqref{nu_1}, as a function of the risk aversion parameter $\gamma$, for three different
values of the correlation between the stock price process and the market price of risk and assuming $X_0=\pi_0$. We see that  $\nu^*_0$  grows in absolute value as $\gamma$ tends to zero. This is explained by the fact that  investors with smaller
risk-aversion need a higher penalization factor (i.e. greater in absolute value) to be diverted from unattainable claims.  The case $\gamma=1$  corresponds to
logarithmic utility. Here no penalty is necessary: investor is myopic and selects only  attainable claims (see Remark 1).  For $\gamma$ larger than $1$
the size of  $\nu^*_0$ is first increasing and then decreasing  towards zero.
In fact, for values of $\gamma$ slightly larger than $1$,  investors are less myopic and attracted by not marketed claims. When investors are more risk averse they put a larger part of their wealth in the risk-free asset, and hence the penalization becomes again less necessary.

\begin{figure}[htpb]
{\includegraphics[height = 6cm, width = 12cm]{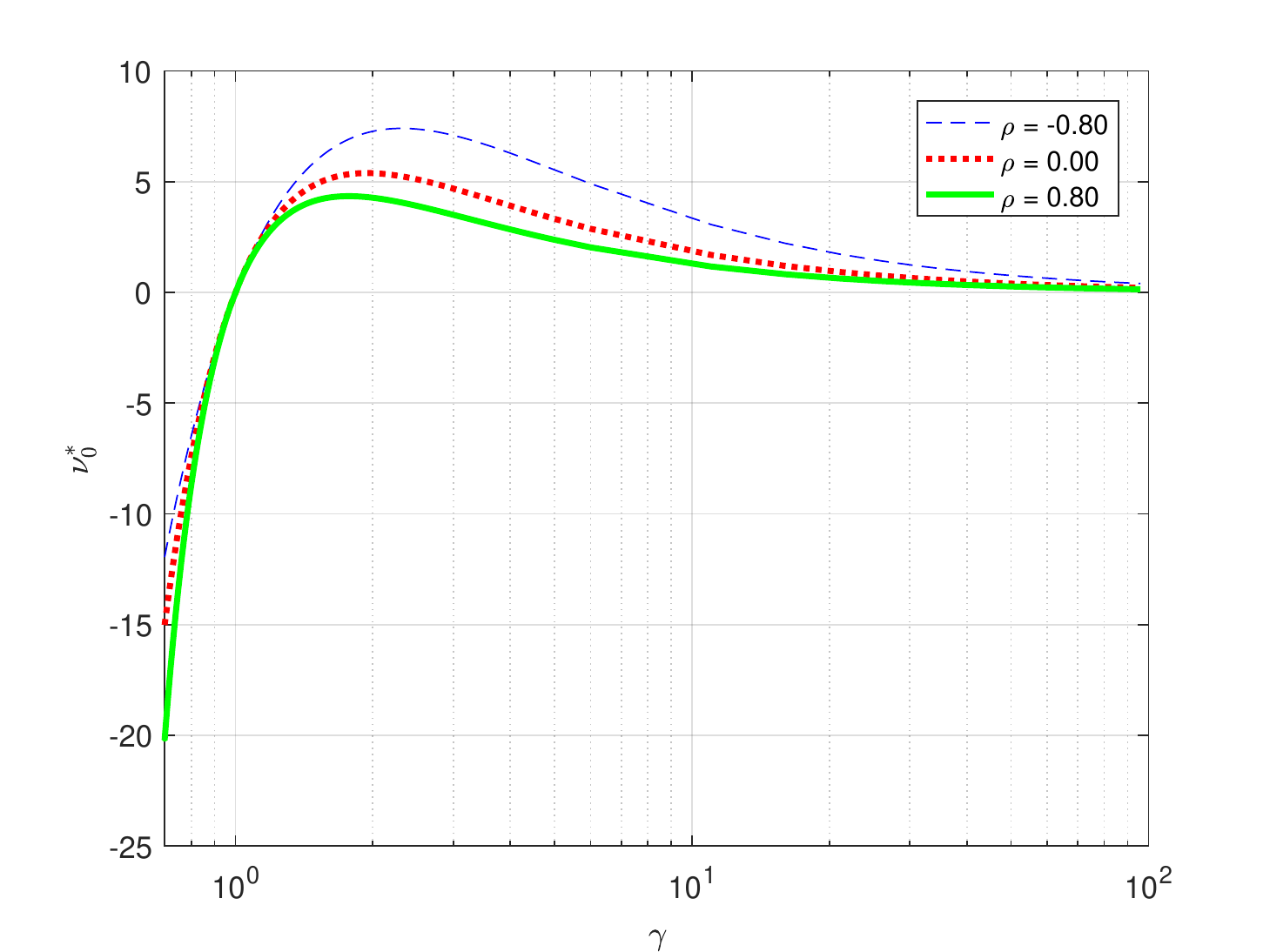}}
\caption{The optimal penalization factor $\nu^*_0$, Equation \eqref{nu_1}, as a function of the risk aversion parameter $\gamma$, when $X_0=\pi_0$, for  correlations: $\rho = 0.8$, continuous line; $\rho =
0$ dotted line; $\rho = -0.8$ dashed line.}
\label{nu2}
\end{figure}

Figure \ref{Criticaltime}  represents the critical time $T^*$, given by
\eqref{crit_t} , that is the maximal horizon of existence for
the solution to \eqref{eqa}-\eqref{BC_Riccati}, as a function of $\gamma$ for two values of
the correlation $\rho$. The analysis of existence of the 
system of Riccati equations states that $T^*$  is finite  when  $\rho$ is larger than $\rho^* \simeq -0.4934$ given by  \eqref{rho_crit},  and for values of $\gamma$ smaller than the value $\gamma^*$ defined by \eqref{gamma_crit}. In this case  $\gamma^*$ is equal to
$0.4933$ when $\rho = 0$, and to $0.7185$ when $\rho = 0.8$.  When $\rho = 0$ Figure \ref{Criticaltime} (left panel) shows that the critical time corresponding to $\gamma = 0.4$ is
about 20 years and it becomes larger than 20 years for $\gamma > 0.4$. In other words, for  $\gamma > 0.4$ the solution to the system
\eqref{eqa}-\eqref{BC_Riccati} is well defined  up to an investment horizon of at least 20 years, and it is well defined for any horizon when  $\gamma > \gamma^*$. In the same plot we also report $T^{**}$ which is the maximal time such that \eqref{eq:cond0} is satisfied. Remind that  \eqref{eq:cond0} is a sufficient condition for the optimal wealth under full information $W^*_t$ to be expressed as in \eqref{OptW_t_a}. When $\rho > \rho^*$  and $\gamma < \gamma^*$ there is a large region in the plane $\gamma,T$ where the solution to the Riccati system if well defined but
\eqref{eq:cond0} does not hold, hence to state that formula \eqref{OptW_t_a} provides the optimal wealth, one should
 prove that  the more general condition \eqref{eq:integral2} of Theorem \ref{prop:OptSolFullInfo} holds.

\begin{figure}[htpb]
{\includegraphics[height = 6cm, width = 12cm]{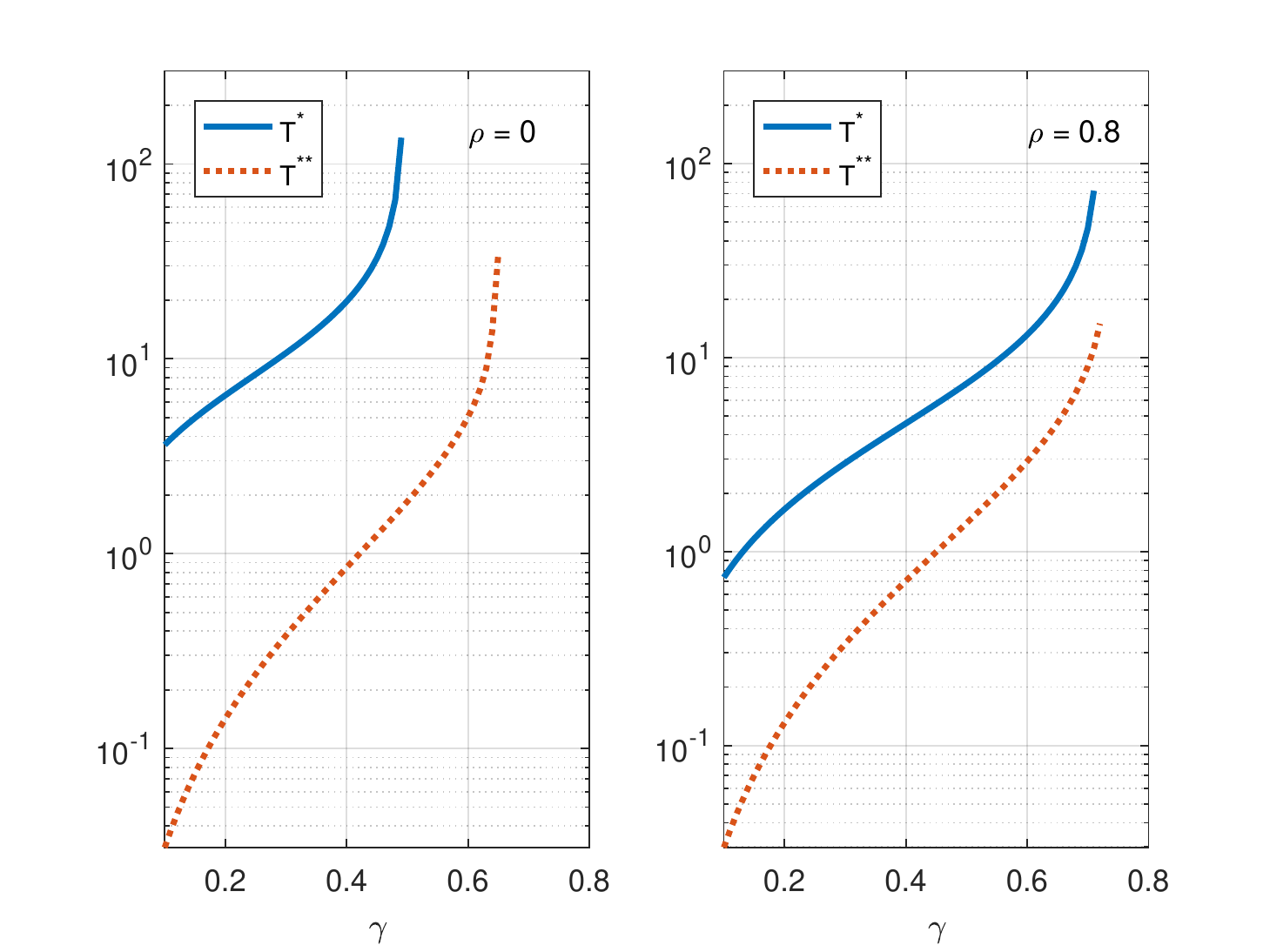}}
\caption{Critical times $T^*$ (continuous line), Equation \eqref{crit_t}, for the system of Riccati equations \eqref{eqa}-\eqref{BC_Riccati}, as a function of $\gamma$, for  correlations $\rho = 0$ (left panel) and $\rho = 0.8$ (right panel). Superimposed the maximal time $T^{**}$ (dotted line) such that \eqref{eq:cond0} is satisfied.}
\label{Criticaltime}
\end{figure}

Propositions \ref{prop:MGF_FI} and \ref{prop:MGF_PI}  characterize the moment generating functions of the optimal wealth under full or partial information. Applying those results and  Fast Fourier Transform we can  compute the corresponding probability distributions very efficiently.  Figure \ref{pdf_bis} represents the probability density functions of the optimal wealth in $T$ for a fully informed investor with $\gamma = 4.03$ and for a partially informed one with $\gamma=2.08$.  We also plot the empirical distributions, obtained by simulations, for a visual check of the precision of our code. The values of $\gamma$ have been chosen so that the expected returns of the two strategies are equal to $15\%$.  Albeit with the same mean, the two distributions have  very different shapes, with the full information density being  more skewed and with a heavier right tail. This has interesting consequence on
the mean-variance curves corresponding to the full and the partial information investment strategies for different level of risk aversions, represented in Figure
\ref{Eff_frontier}. To connect Figure \ref{Eff_frontier} and Figure \ref{pdf_bis} we also indicate the points corresponding to the two values of $\gamma$ for which we computed the densities. We see that the curve of expected returns under partial information dominates the full information one. Hence, if an investor following a mean-variance criterion (as, for instance, maximizing the Sharpe ratio of her investment) had to choose between optimal strategies under full or under partial information, she would always select the partial information one. This is a consequence of the heavier right tail of the wealth distribution under full information (clearly shown in Figure \ref{pdf_bis}), a feature not much appreciated  by a  mean-variance kind of investor.

\begin{figure}[htpb]
{\includegraphics[height = 6cm, width = 12cm]{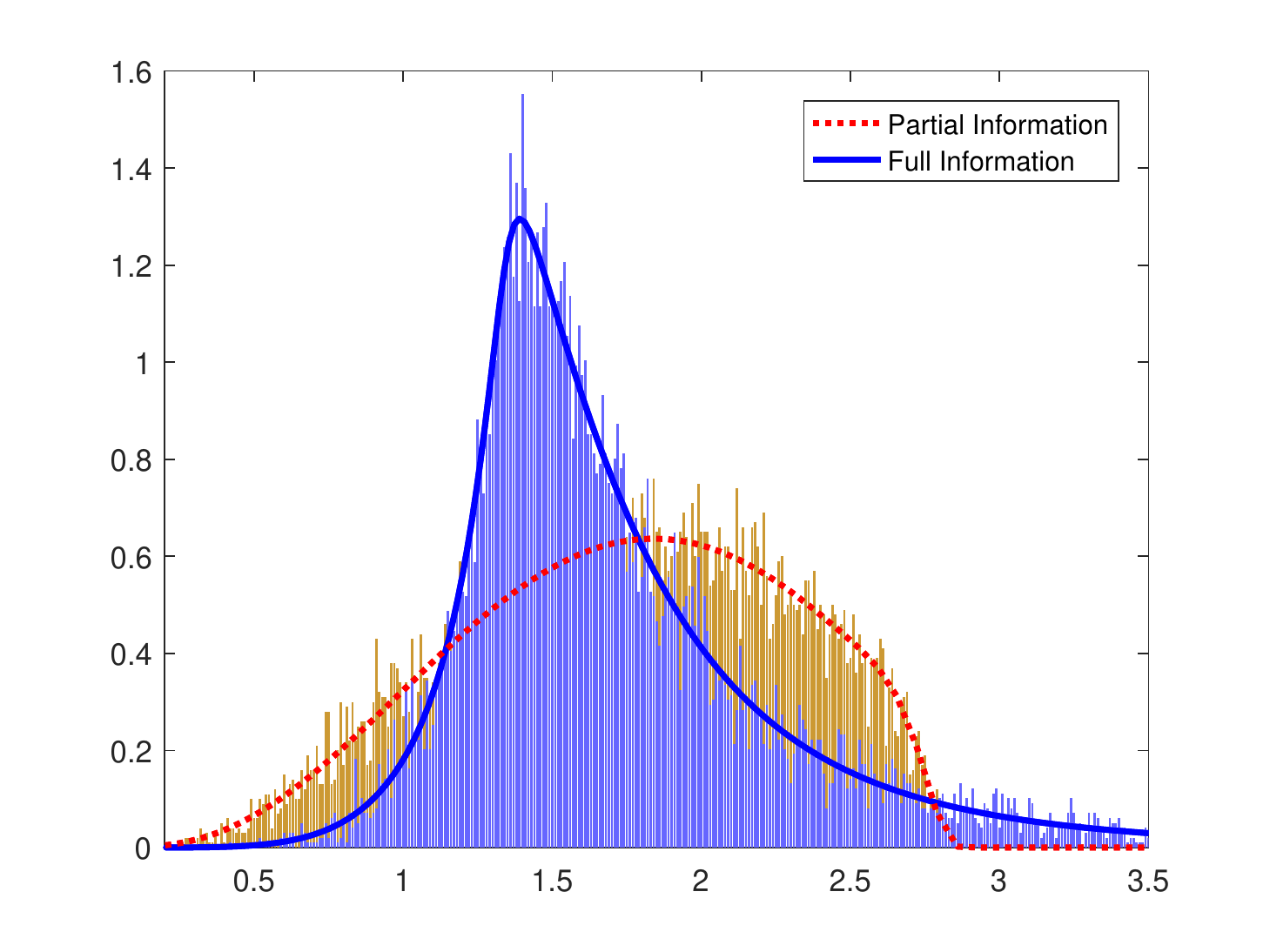}}
\caption{Probability distribution for the optimal final wealth under full (continuous line) and partial (dotted line) information starting from $w = 1$. The two distributions have the same mean, and are obtained by setting $\gamma=2.08$ for the partially informed investor and $\gamma=4.03$ for the fully informed one.}
\label{pdf_bis}
\end{figure}

\begin{figure}[h]
{\includegraphics[height = 6cm, width = 12cm]{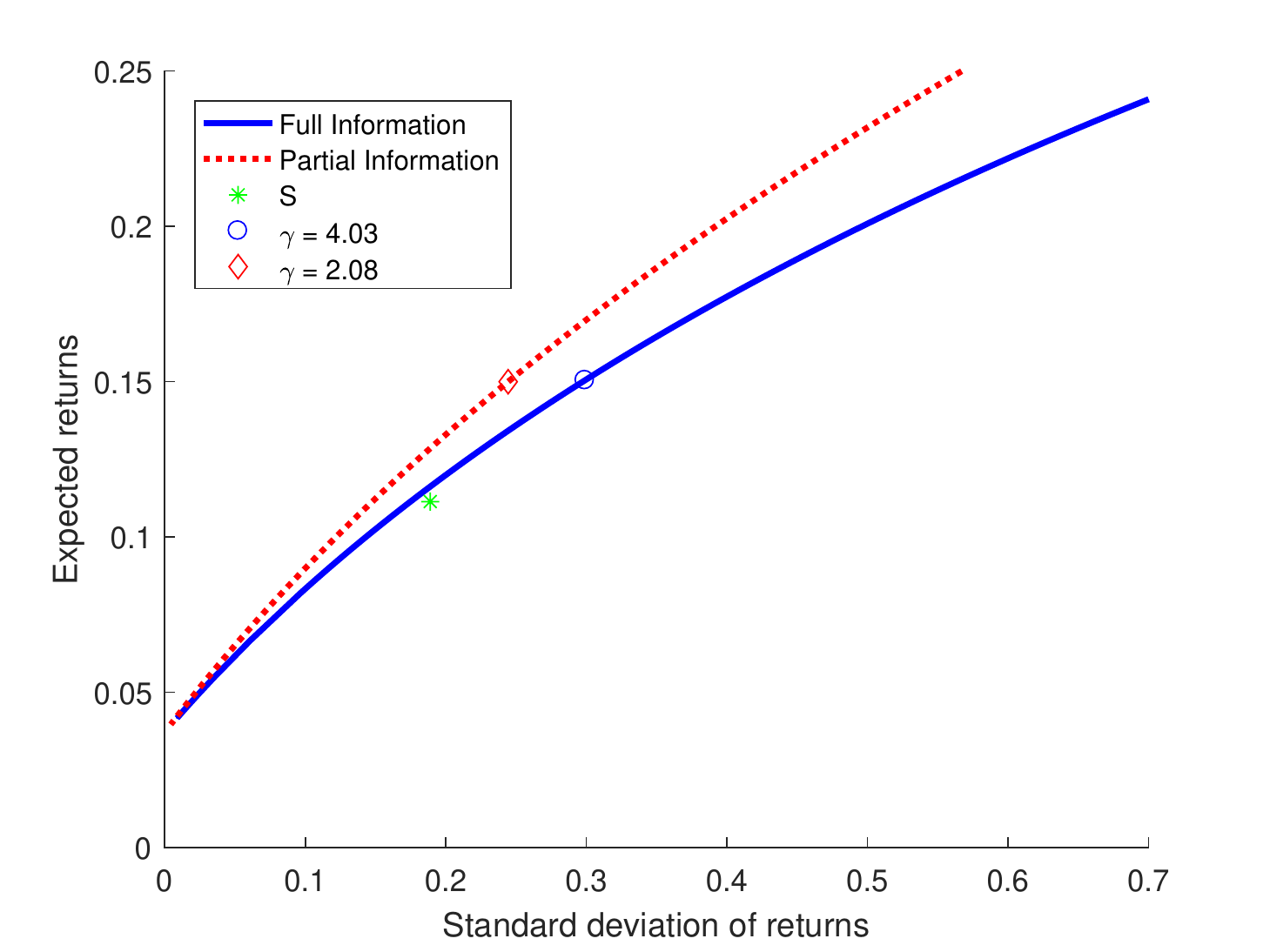}}
\caption{Expected returns of optimal strategies under full (continuous line) or partial (dotted line) information as functions of their standard deviations. The points obtained for $\gamma=2.08$ with partial information and   $\gamma=4.03$ with full information are reported, for reference with Figure \ref{pdf_bis}. The point $S$ represents the risky asset. }
\label{Eff_frontier}
\end{figure}


 The cumulative probability distributions for the optimal final wealth under full and partial information are represented in Figure \ref{cdf}.
 The plot shows that the optimal wealth under full information (continuous line) stochastically
dominates the optimal wealth in partial information (dotted line). However such a dominance is lost if the partially informed investor adds to the initial budget $w$
the reservation price  for Dynamic Information $\Delta w$. In this case, by definition, the investor attains the same expected utility as the fully informed investor and
hence her optimal wealth is sometimes  lower sometimes higher  than the one  obtained by the fully
informed investor.
\begin{figure}[htpb]
{\includegraphics[height = 6cm, width = 12cm]{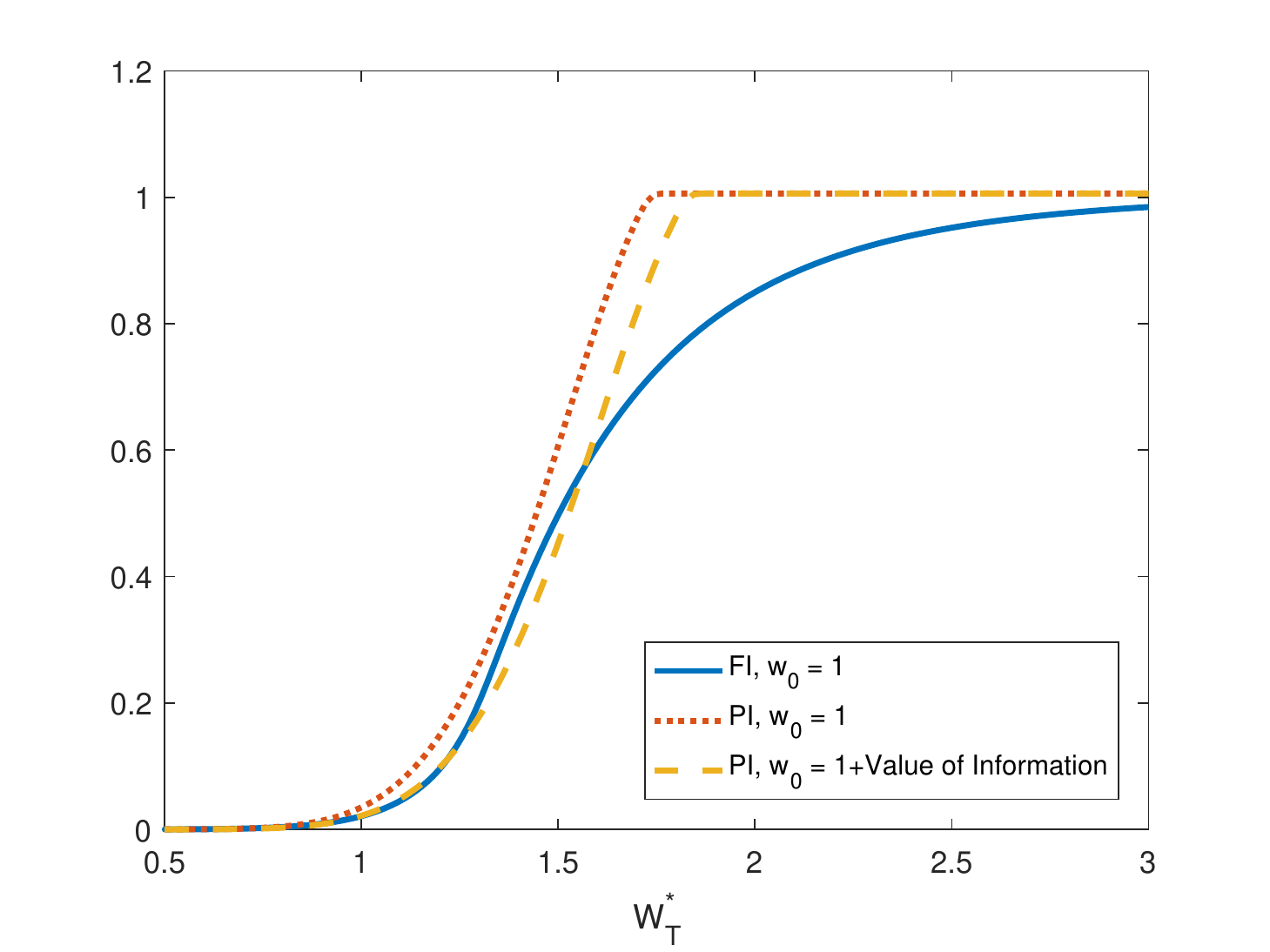}}
\caption{Cumulative probability distribution for the optimal final wealth under full (continuous line) and partial (dotted line) information starting from $w = 1$ and for
the optimal wealth under partial information starting from $w = 1 + \Delta w$ (dashed line), where $\Delta w$ is the reservation price of Dynamic Information .}
\label{cdf}
\end{figure}

 The certainty equivalent of the optimal utility under partial information with respect to the initial conditional variance $R_0$, computed from
\eqref{U_PI}, is represented in Figure \ref{CE_PI_vsR0}.
 The expected utility does not always grow as the precision of the initial estimates increases. In particular, for different values of $\rho$, the
 certainty equivalent is either increasing or it takes on the minimum value within the interval $(0.1,1)$. The intuitive explanation for this fact is that, when the expected
 value of the market price of risk $\pi_0$ is fixed,  a greater uncertainty on its estimate may increase the likelihood of a better or a more favorable outcome, consequently raising
 the expected utility of the optimal wealth.

Figure \ref{ValueInfoR0_onlyx0}  shows the value of the Initial Information ${\mathcal V}^I$ (see Equation \eqref{InitialInfo}) as a function of  $R_0$, for three  values of the
correlation $\rho$. As  expected, the higher the uncertainty on the initial
estimate, the higher ${\mathcal V}^I$. It is perhaps less expected  that  the value  is  greater for $\rho=-0.9$ than for the other two cases.  Why is  the
investor willing to pay a larger share of her initial wealth when the correlation of the changes in the market price of risk with the stock returns is more negative? In our
opinion, this is a combination of two effects: the first effect is related to the precision of the estimate of the market price of risk, the second effect to the
expected return of the optimal strategy. To explain the first effect, we note that, when $\rho=0.9$, the variance of the estimate, $R(t)$, decreases faster to the steady
state
$R_\infty=0.0092$, while for $\rho=-0.9$ and $\rho=0$, it decreases, at a slower rate, towards
$R_\infty=0.0632$ and $R\infty=0.0769$, respectively. Hence  a more accurate information on $X_0$  must be worth less when $\rho=0.9$.
As for the second effect, the certainty equivalent of the optimal strategy under partial information when $R_0=0$ obtained from \eqref{phisTtilde},  is $32.11\%$ of the initial wealth for $\rho=-0.9$, $26.11\%$ for $\rho=0$ and $26.58\%$ for $\rho=0.9$. Therefore, when
$\rho=-0.9$, the investor
is expecting a higher return, and hence she is willing to invest a larger share of her initial wealth to know the exact value of $X_0$.
\begin{figure}[h]
{\includegraphics[height = 6cm, width = 12cm]{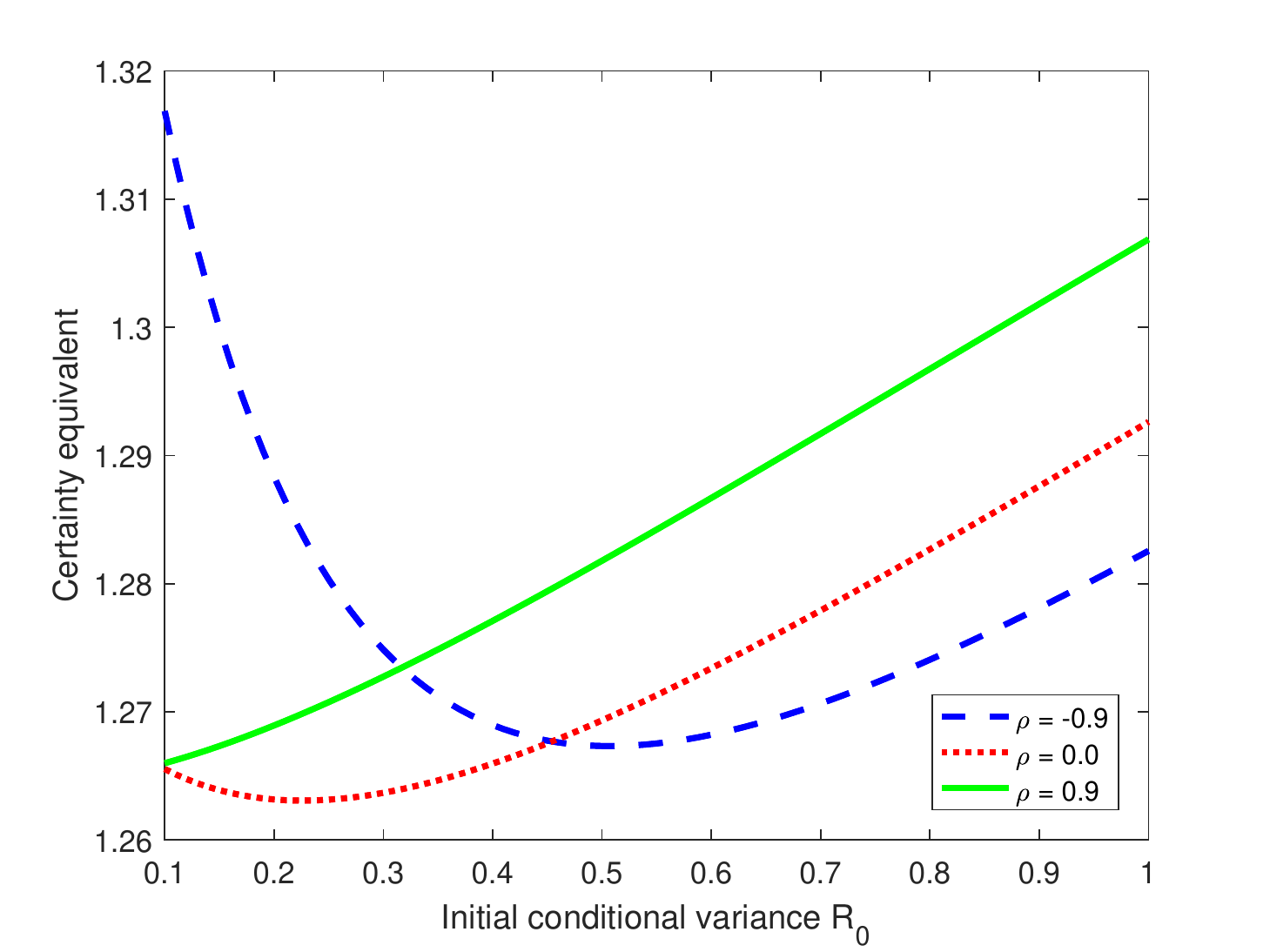}}
\caption{The certainty equivalent under partial information computed from \eqref{U_PI} as a function of the initial variance of the estimate $R_0$.}
\label{CE_PI_vsR0}
\end{figure}

\begin{figure}[h]
{\includegraphics[height = 6cm, width = 12cm]{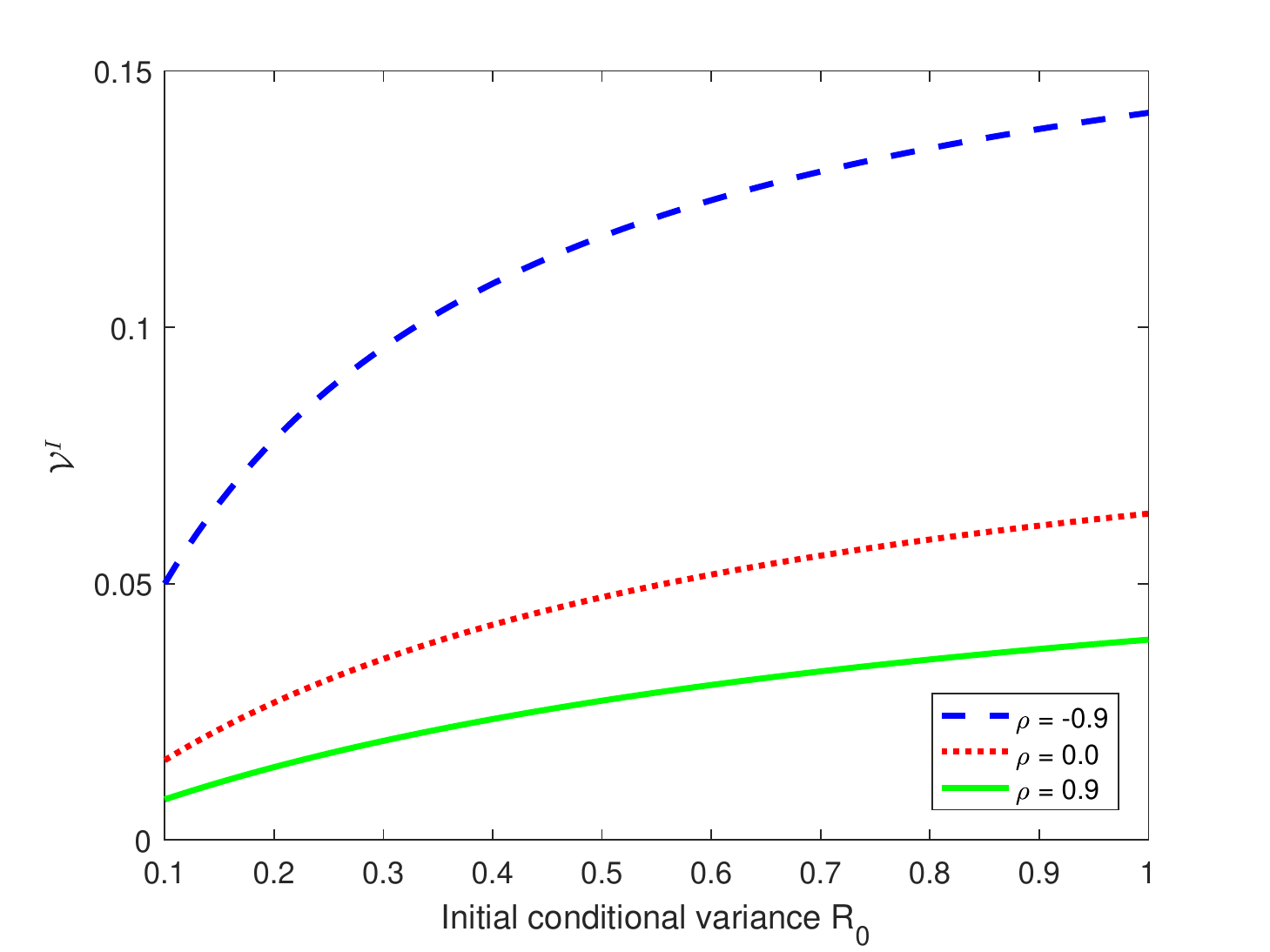}}
\caption{The value of Initial Information \eqref{InitialInfo} as a function of the initial variance of the estimate $R_0$, for different correlation $\rho$.}
\label{ValueInfoR0_onlyx0}
\end{figure}

\begin{figure}[h]
{\includegraphics[height = 6cm, width = 12cm]{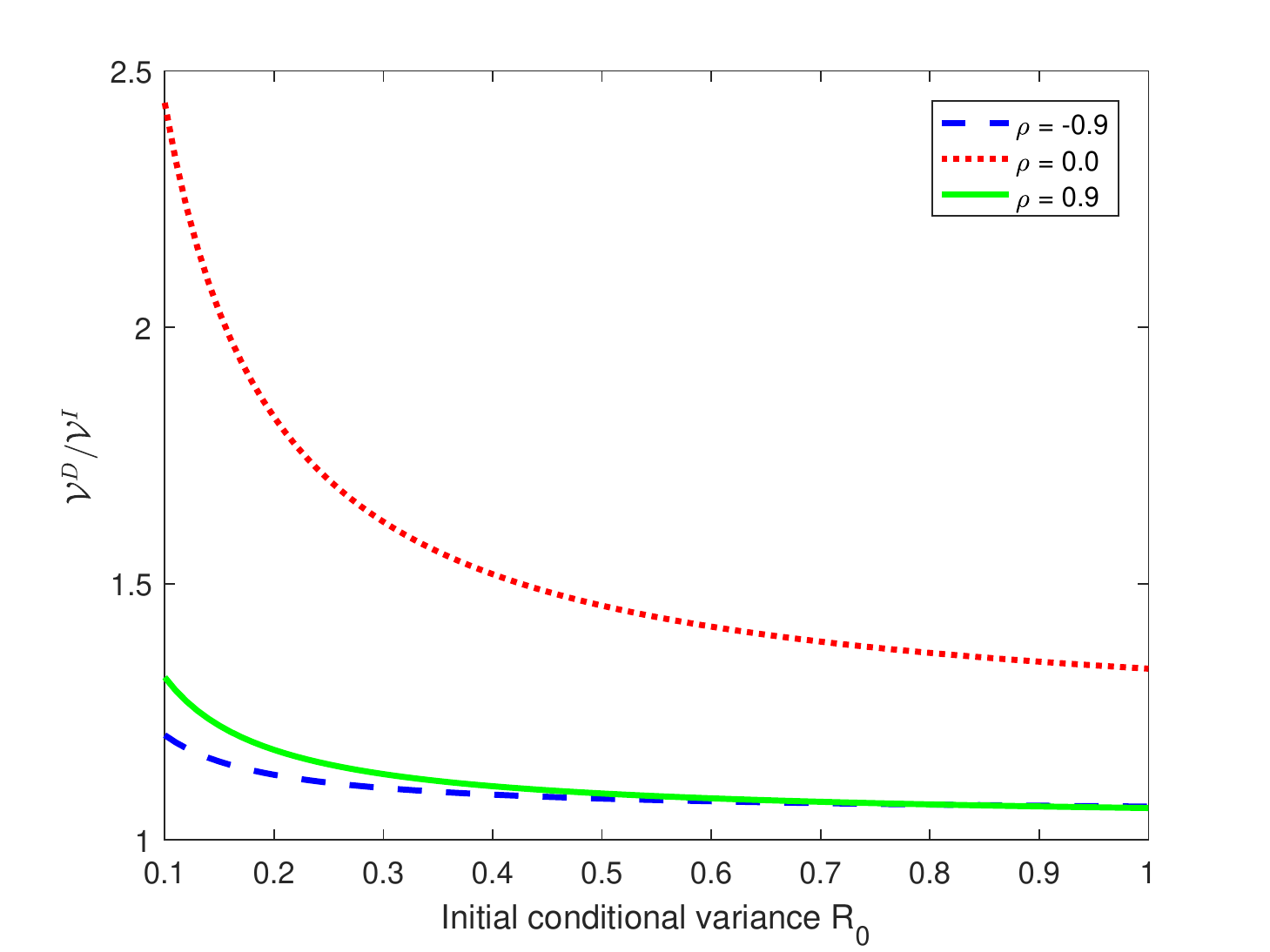}}
\caption{The ratio of the values of information:  Dynamic Information, Equation \eqref{valueInfo}, over Initial Information, Equation \eqref{InitialInfo}, as a function of the initial uncertainty $R_0$.}
\label{RatioValueInfoR0_rho.pdf}
\end{figure}

Figure \ref{RatioValueInfoR0_rho.pdf} presents the ratio of the value of Dynamic Information ${\mathcal V}^D$,  \eqref{valueInfo}, over  the value of Initial Information
${\mathcal V}^I$ \eqref{InitialInfo},  as a function of the initial uncertainty $R_0$, and for different values of $\rho$.
The ratio   is always positive and greater than $1$  because of \eqref{VIVD}. It is decreasing with $R_0$ and  converges to a constant as $R_0$ increases.
When $R_0$ goes to zero,  ${\mathcal V}^I$ also goes to zero while ${\mathcal V}^D$  converges to a positive value, hence  the ratio  grows unbounded.
The ratio is larger for $\rho=0$ and the difference between $\rho=0.9$ and $\rho=-0.9$ is  small. Intuitively, when the correlation is close to $1$ or $-1$, the knowledge of  the starting value for the process $X$ is  sufficient to
estimate with good precision also its next values, and hence the value added by the full knowledge of $X$ is low (for our set of parameters it is around $5\%$
of the value of knowing only $X_0$). Instead, when there is no correlation ($\rho=0$), knowing $X_0$ alone is not sufficient to get a good future estimate for $X$,   and hence the  value added by the dynamic information is more appreciated by the investor.

Figure \ref{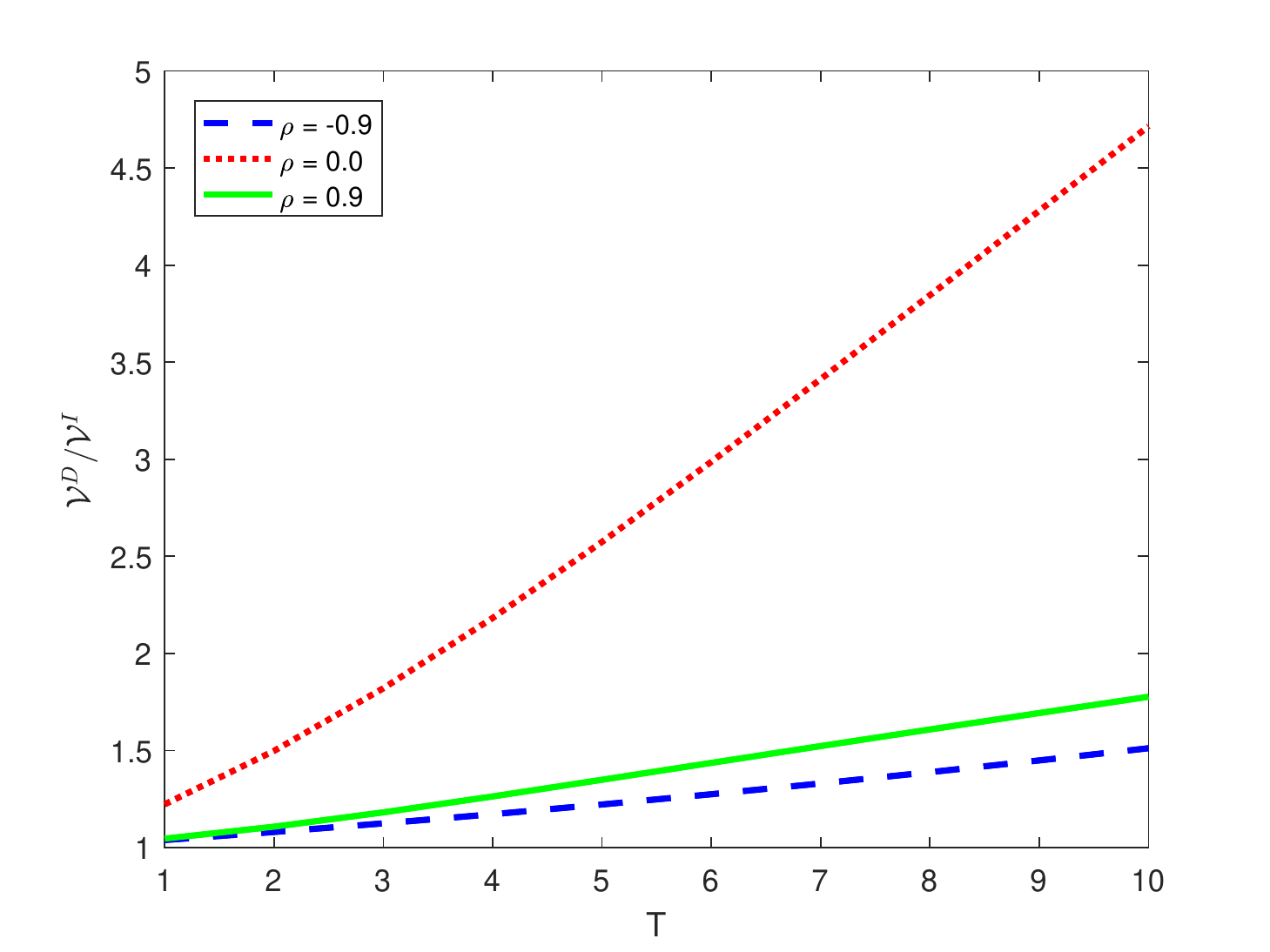}  provides the  ratio ${\mathcal V}^D/{\mathcal V}^I$ as a function of  the investment horizon $T$, for a fixed value $R_0=0.09$. The ratio
increases almost linearly with $T$, but more steeply for $\rho=0$, that is when having access to a dynamic information on the market price of risk  adds a significant
improvement to the investment policy.

\begin{figure}[h]
{\includegraphics[height = 6cm, width = 12cm]{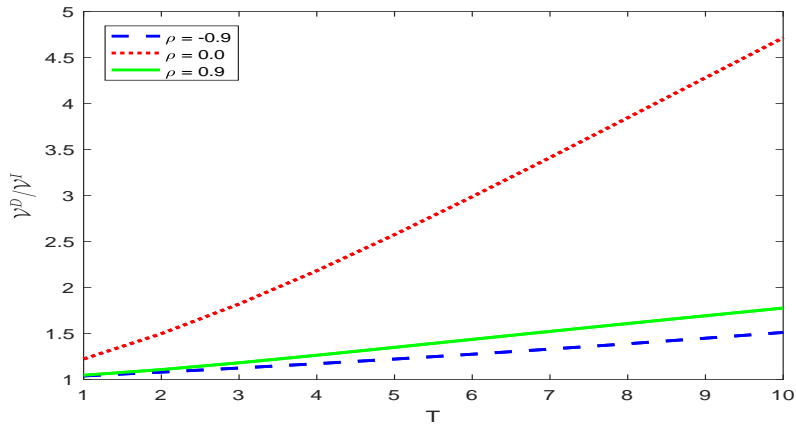}}
\caption{The ratio of the values of information:  dynamic information \eqref{valueInfo} over initial information \eqref{InitialInfo}, as a function of the length of the investment period $T$, for fixed
$\gamma=5$.}
\label{Ratio_maturity.pdf}
\end{figure}

\section{Conclusions}\label{sec:conclusions}

We  studied a portfolio optimization problem for an investor who aims to maximize her expected utility from terminal wealth under two different hypotheses on the information flows when the market price of risk is stochastic and mean-reverting.
We solved the problem via the martingale approach and found an explicit representation for the optimal wealth and its associated utility as function of the current state-price density process and of the market price of risk $X$ in the full information case, or of its best estimate $\pi$ under partial information. We also provided verification theorems for our results.

We introduced the notion of value of information as the maximum percentage of the initial wealth that an investor would be willing to pay to access to more accurate information  on the market price of risk $X$.  In particular we  considered the value of knowing the whole path of $X$ on-the-run and the value of knowing only
its initial value $X_0$.  Using the structure of the solutions of the Riccati equations that characterize the optimal wealth, we determined  closed form representations of such values.
We  provided  applications to illustrate some consequences of our results. The empirical analysis of the distribution of the optimal wealth under full and partial information showed several features that could not be guessed a priori, like for instance the fact that, under our parameter setting,  an investor who cares for the Sharpe ratio of her investment would better allocate her wealth to a partially informed portfolio manager rather than to a fully informed one.
Our measure for the value of information  can be applied to real market data,  for example to determine in which periods of time the
access to a better knowledge on the market price of risk is more valuable. Our approach may also be used to assess the value of an improvement of the initial
prior on the market parameters, and consequently to address issues related to the evaluation of model error.

\begin{center}
{\bf Acknowledgements}
\end{center}
The authors would like to thank the Referees for their useful suggestions. The work on this paper initiated when Katia Colaneri was visiting the Department of Economics,
University of Perugia, as a part of the ACRI Young Investigator Training Program (YITP): the Association
of Italian Banking Foundations and Savings Banks (ACRI) partially supported this work. Katia Colaneri also
received partial financial support from INdAM-GNAMPA under grant UFMBAZ-
2018/000349 and UFMBAZ-2019/000436. The research of Stefano Herzel and Marco Nicolosi was partially funded by the Swedish Research Council grant 2015-01713. Part of this article was written while Katia Colaneri was affiliated with the School of Mathematics of the University of Leeds (UK).

{}

\end{document}